\newtheorem{theorem}{Theorem}[section]
\newtheorem{lemma}[theorem]{Lemma}
\newtheorem{definition}{Definition}[section]
\newcommand{\veN}{\vec{N}}
\newcommand{\bA}{\mathbf{A}}
\newcommand{\bF}{\mathbf{F}}
\newcommand{\bI}{\mathbf{I}}
\newcommand{\bT}{\mathbf{T}}
\newcommand{\ga}{\alpha}
\newcommand{\gb}{\beta}
\newcommand{\gc}{\gamma}
\newcommand{\gd}{\delta}
\newcommand{\gr}{\rho}
\newcommand{\gs}{\sigma}
\newcommand{\gt}{\tau}
\newcommand{\go}{\omega}
\newcommand{\bgp}{\mbox{\boldmath $\pi$}}
\newcommand{\cT}{\mathcal{T}}
\newcommand{\ssim}{\stackrel{\mathsf{s}}{\sim}} % similarity relation for stacks
\newcommand{\nat}{\mathbb{N}}
\newcommand{\sut}{:} % operator of set formation
\newcommand{\labelto}[1]{\rightarrow_{#1}} % right arrow with label
\newcommand{\mslabelto}[1]{\twoheadrightarrow_{#1}} % right two head arrow with label
\newcommand{\gramm}{\mathrel{::=}} % EBNF grammar definition
\newcommand{\ass}{\mathrel{:=}} % syntactical definition 
\newcommand{\seq}[1]{\vec{#1}}
\newcommand{\at}{\!\centerdot} % stack constructor (cons)
\newcommand{\ats}{\at\ldots\at} % stack constructor (cons) with lower suspension dots 
\newcommand{\aps}{\ap\cdots\ap} % iterated application in the extended stack calculus 
\newcommand{\sps}[3]{\bgp^{(#1,#2,#3)}} % special stack defined as \overbrace{\cadr{#1}{0}\at\ldots\at\cadr{#1}{0}}^{#3 \mbox{ times}}\at #1
\newcommand{\spt}[1]{\bA^{(#1)}} % special term defined as \bd\epsilon.\cadr{\gd}{0}\ap(\cadr{\epsilon}{0}\at\ldots\at\cadr{\epsilon}{q-1}}\at\cddr{\epsilon}{q})
\newcommand{\s}{\mathsf{s}} % one-step reduction of the stack calculus
\newcommand{\eo}{\eta_1} % first half of the extensional reduction
\newcommand{\et}{\eta_2} % second half of the extensional reduction
\newcommand{\cdr}[1]{\mathsf{cdr}(#1)} % tail of stack
\newcommand{\car}[1]{\mathsf{car}(#1)} % head of stack
\newcommand{\cddr}[2]{#1[#2)} % iterated tail of stack
\newcommand{\cadr}[2]{#1[#2]} % head of an iterated tail of stack
\newcommand{\nil}{\mathsf{nil}} % empty stack
\newcommand{\hnf}[1]{\mathsf{Hnf}(#1)} % partial function returning the head normal form
\newcommand{\bdom}[2]{\mathsf{dom}(#1,#2)} % bounded domain of a term seen as a function over sequences of natural numbers
\newcommand{\dom}[1]{\mathsf{dom}(#1)} % domain of a term seen as a function over sequences of natural numbers
\newcommand{\trun}[2]{#1 \!\upharpoonright {#2}} % truncation of sequence #1 to index #2
\newcommand{\virt}[1]{\mathsf{vir}(#1)} % set of sequences that belong virtually to the map corresponding to a term
\newcommand{\simty}{\stackrel{\infty}{\sim}} % similarity at all sequences of natural numbers
\newcommand{\pexp}[2]{\mbox{\boldmath{$\langle$}} #1 \lVert #2 \mbox{\boldmath{$\rangle$}}} % path expansion of a term
\newcommand{\Seq}{\mathsf{Seq}} % the set of finite sequences of strictly positive natural numbers
\newcommand{\opeq}{\approx} % operational equivalence relation
\newcommand{\sub}[2]{\{#1/#2\}} % classical substitution of #1 for #2
\newcommand{\ap}{\star} % application symbol of a term to a process
\newcommand{\bd}{\mu} % binder for stack variables
\newcommand{\FV}{\mathrm{FV}} % free variables
\newcommand{\KTer}[1]{\Sigma^{\mathsf{#1}}} % set of terms of the stack calculus
\newcommand{\lmu}{\ensuremath{\lambda\mu}} % name of Parigot's lambda-mu calculus 
\newcommand{\leng}[1]{\sharp #1} % length of a sequence
\newcommand{\wei}[1]{\mathsf{w}(#1)} % weight of a term
\newcommand{\bwei}[2]{\mathsf{w}(#1,#2)} % bounded weight of a term
\newcommand{\brea}[1]{\mathsf{b}(#1)} % breadth of a term
\newcommand{\bbrea}[2]{\mathsf{b}(#1,#2)} % bounded breadth of a term
\newcommand{\tystk}[4]{%
\ifthenelse{\equal{#1}{s}\OR\equal{#1}{t}}{
	\ifthenelse{\equal{#1}{s}}{%\equal{#1}{s}
		\ifthenelse{\isempty{#2}}{#3 \vdash #4}{{#2} {:} #3 \vdash #4}
		}{%\equal{#1}{t}
		\ifthenelse{\isempty{#2}}{\vdash #3, #4}{\vdash {#2} {:} #3 \ {\mid} \ #4}
		}
	}{%\equal{#1}{p}
	\ifthenelse{\isempty{#2}}{\vdash #4}{\vdash {#2} {\mid} #4}
	}
}
\newcommand{\ntystk}[4]{%
\ifthenelse{\equal{#1}{s}\OR\equal{#1}{t}}{
	\ifthenelse{\equal{#1}{s}}{%\equal{#1}{s}
		\ifthenelse{\isempty{#2}}{#3 \nvdash #4}{{#2} {:} #3 \nvdash #4}
		}{%\equal{#1}{t}
		\ifthenelse{\isempty{#2}}{\nvdash #3, #4}{\nvdash {#2} {:} #3 \ {\mid} \ #4}
		}
	}{%\equal{#1}{p}
	\ifthenelse{\isempty{#2}}{\nvdash #4}{\nvdash {#2} {\mid} #4}
	}
}
\newcommand{\tylmu}[4]{
	\ifthenelse{\isempty{#3}}
	{%if
	\ifthenelse{\isempty{#4}}
		{#1 \vdash_{\lmu} {#2}}
		{#1 \vdash_{\lmu} {#2} \mid #4}
	}
	{%else
        \ifthenelse{\isempty{#4}}
        	{#1 \vdash_{\lmu} {#2} {:} #3}
        	{#1 \vdash_{\lmu} {#2} {:} #3 {\mid} #4}
	}
}
\title{The untyped stack calculus and B\"{o}hm's theorem}
\author{
Alberto Carraro
% \institute{DAIS, Universit\`{a} Ca' Foscari Venezia, Italia}
% \email{acarraro@dsi.unive.it}
\institute{PPS, Universit\'{e} Paris Diderot, France}
\email{acarraro@pps.univ-paris-diderot.fr}
}
\begin{document}
\maketitle

\begin{abstract}
The stack calculus is a functional language in which is in a Curry-Howard correspondence with classical logic.
 It enjoys confluence but, as well as Parigot's $\lambda\mu$, does not admit the B\"{o}hm Theorem, typical of the $\lambda$-calculus.
 We present a simple extension of stack calculus which is for the stack calculus what Saurin's $\Lambda\mu$ is for $\lambda\mu$.
\end{abstract}

%%%%%%%%%%%%%%%%%%%%%%%%%%%%%%%%%%%%%%%%%%%%%%%%%%%%%%%%%%%%%%%%%%%%%%%%%%%%%%%%%%%%%%%%%%%%%%
%\vspace*{-3.0mm}
\section{Introduction}\label{sec:intro}
%%%%%%%%%%%%%%%%%%%%%%%%%%%%%%%%%%%%%%%%%%%%%%%%%%%%%%%%%%%%%%%%%%%%%%%%%%%%%%%%%%%%%%%%%%%%%%

In \cite{Bohm68} Corrado B\"{o}hm proved the so-called \emph{B\"{o}hm's theorem}, a fundamental syntactical feature of the
 pure $\lambda$-calculus which states that if $M$ and $N$ are two distinct $\beta\eta$-normal terms, then for each pair of terms $P,Q$ there exists a
 context $C[\cdot]$ such that $C[M]$ is $\beta$-equivalent to $P$ and $C[N]$ is $\beta$-equivalent to $Q$. If moreover $M$ and $N$ are closed,
 the context $C[\cdot]$ can have shape $[\cdot]\seq L$, for a suitable sequence $\seq L = L_1,\ldots,L_k$ of $\lambda$-terms. The original issue motivating this result was the quest for solutions of systems of equations between $\lambda$-terms: given closed $\lambda$-terms
 $M_1,N_1,\ldots,M_n,N_n$, is there a $\lambda$-term $S$ such that $SM_1 =_\beta N_1 \wedge \cdots \wedge SM_n =_\beta N_n$ holds?
 The answer is trivial for $n = 1$ (just take $S = \lambda z.N_1$ for a fresh variable $z$) and B\"{o}hm's theorem gives a positive answer for $n = 2$ when
 $M_1,M_2$ are distinct $\beta\eta$-normal forms (apply the theorem to $M_1$ and $M_2$ and then set $S = \lambda f.f\seq L N_1 N_2$). The
 result has been then generalized (and this step is non-trivial) in \cite{BohmDPR79} to treat every finite family $M_1,\ldots,M_n$ of pairwise distinct
 $\beta\eta$-normal forms.

The notion of \emph{operational equivalence} has been a subject of many research works in the literature. Essentially one considers as ``equivalent'' two $\lambda$-terms $M$ and $N$ when for every possible context $C[\cdot]$ the head
 reduction process of $C[M]$ halts iff the head reduction process of $C[N]$ halts. It is common to think of two non-operationally equivalent terms
 as programs that can be distinguished one another by making them interact with all possible environments, observing termination of head reduction.
 From B\"{o}hm's theorem it follows that given two distinct $\beta\eta$-normal forms one can choose a term $P$ that admits a head normal
 form and a term $Q$ does not have a head normal form, having the guarantee that there exists a context $C[\cdot]$ such that
 $C[M]=_\beta P$ and $C[N]=_\beta Q$. In this sense $M$ and $N$ would be ``separated'' (or distinguished) by the context $C[\cdot]$ witnessing
 that $M$ and $N$ are not operationally equivalent. For this reason B\"{o}hm's theorem is also known as the \emph{separation theorem} and it is said to prove the \emph{separation property}
 for the untyped $\lambda$-calculus.
 
The separation property has consequences both on the semantical and on the syntactical side. For example it implies that $\beta\eta$-equivalence is
 the maximal non-trivial congruence on normalizable terms extending the $\beta$-equivalence, so that any model of the $\lambda$-calculus cannot
 identify two different $\beta\eta$-normal forms without being trivial. A possible reading of B\"{o}hm's theorem is that the $\lambda$-calculus is
 powerful enough to ``inspect'' itself and the syntax and the reduction rules fit each other well. Nonetheless B\'{o}hm's result gives an alternative characterization of operational equivalence for normalizable terms. The complete characterization
 operational equivalence (also for non-normalizable terms) was then achieved by Hyland \cite{Hyl76} and Wadsworth \cite{Wad76}.
 In general two $\lambda$-terms are operationally equivalent iff they have the same B\"{o}hm tree, up to possibly infinite $\eta$-expansion iff they
 have the same denotation in Scott's $D_\infty$ model (see also \cite{Bare84}). B\"{o}hm's proof shows how to produce the separating context with an
 algorithm whose inputs are $M$, $N$ ($\beta\eta$-normal terms) $P$ and $Q$ (arbitrary terms). Following this observation Huet \cite{Huet93} shows an ML implementation of B\"{o}hm's algorithm and poses the problem of formalizing
 a proof of B\"{o}hm's theorem for the purpose of mechanical checking. The combinatorial core of B\"{o}hm's algorithm is called the
 \emph{B\"{o}hm-out technique} and it is at the basis of the implementation, presented in \cite{Bohm94}, of the CUCH-machine, a $\lambda$-calculus
 interpreter introduced by B\"{o}hm and Gross in \cite{BohmGross66}. Various generalizations/extensions of B\"{o}hm's theorem have been studied. The $\lambda$-calculus has been immersed in other languages
 in order to obtain finer observations on the behaviour of $\lambda$-terms. Sangiorgi \cite{Sang94} considers the encoding of the $\lambda$-calculus in
 the $\pi$-calculus with the addition of a unary non-deterministic operator. In \cite{Dez96} and \cite{Dez99} Dezani et al. add a
 binary parallel operator and a non-deterministic choice. Manzonetto and Pagani \cite{ManPag11} give a proof of B\"{o}hm's theorem for a resource-sensitive
 extension of the $\lambda$-calculus.
%All the above mentioned works the studied extensions are equivalent from the point of view of separability.

Curry-Howard correspondence \cite{Howard80} was first stated as the isomorphism between natural deduction for minimal intuitionistic
 logic \cite{Prawitz65} and the simply typed $\lambda$-calculus. Later Griffin \cite{Griffin90} proposed that natural deduction for
 classical logic could be viewed as a type system for a $\lambda$-calculus with a certain control operator introduced by Felleisen \cite{Felleisen90}.
 Several other proposals have been made for a computational interpretation of classical logic, among which Parigot's $\lambda\mu$-calculus \cite{Parigot91}
 received a lot of attention. In \cite{David01} David and Py proved that the $\lambda\mu$-calculus does not satisfy the separation property.
 Saurin \cite{Saurin05} exhibited an extension of $\lambda\mu$-calculus, the $\Lambda\mu$-calculus, in which the separation property does hold.
 In \cite{CaEhSa12} the authors introduce the stack calculus, a finitary functional language in which the $\lambda\mu$-calculus can be faithfully
 translated. 

In the present paper we prove, using David and Py's \cite{David01} counterexample, that the separation property does not hold
 for the stack calculus, i.e., there are (extensionally) different normal forms which are operationally equivalent. We
 introduce the extended stack calculus which is a calculus that contains the stack calculus (as Saurin's $\Lambda\mu$ extends
 Parigot's $\lambda\mu$). We show that operational equivalence is maximally consistent, i.e. it cannot be properly extended to
 another consistent equational theory both in the stack calculus and in the extended stack calculus. We work out the details of
 a B\"{o}hm-out technique for the extended stack calculus. A nice feature of the extended stack calculus is that, having only 
 one binder (instead of two as in $\lambda\mu$), it admits a simpler proof of B\"{o}hm's theorem, which is similar to the one
 for the $\lambda$-calculus.

The treatment of B\"{o}hm's theorem deserves a prominent place in classical monographs on the $\lambda$-calculus (Hindley\textendash Seldin
 \cite{Hindley86}, Hankin \cite{Hankin95}, Barendregt \cite{Bare84}). Besides the applications of B\"{o}hm's theorem, there has always been interest
 around the proof itself and the algorithmic content of B\"{o}hm-out technique. In \cite{DGP08} Dezani et al. give a thorough account of B\"{o}hm's
 theorem, together with an overview of the impressive research activity which originated from it. To the best of our knowledge since Huet's
 challenge \cite{Huet93}, no mechanical proof of B\"{o}hm's theorem has been produced yet. Instead Aehlig and Joachimski \cite{Aehlig02} provide a
 different proof of B\"{o}hm's theorem that does not use the B\"{o}hm-out technique. In view of the interest in proofs of B\"{o}hm's
 theorem for various calculi, we believe useful to contribute in the present work with a direct proof of B\"{o}hm's
 theorem (i.e. with a B\"{o}hm-out technique) for the extended stack calculus, even if the mere separation
 result would follow by a suitable mutual translation with the $\Lambda\mu$-calculus.
 
%%%%%%%%%%%%%%%%%%%%%%%%%%%%%%%%%%%%%%%%%%%%%%%%%%%%%%%%%%%%%%%%%%%%%%%%%%%%%%%%%%%%%%%%%%%%%%
\section{The untyped stack calculus}\label{sec:untyped_stack}
%%%%%%%%%%%%%%%%%%%%%%%%%%%%%%%%%%%%%%%%%%%%%%%%%%%%%%%%%%%%%%%%%%%%%%%%%%%%%%%%%%%%%%%%%%%%%%

We report the presentation of stack calculus from \cite{CaEhSa12}. The language has three syntactic categories: \emph{terms} that are in functional position, \emph{stacks} that are in argument position and
represent streams of arguments, \emph{processes} that are terms applied to stacks.

The basis for the definition of the stack calculus language is a countably infinite set of \emph{stack variables}, ranged over by the initial small
 letters $\alpha,\beta,\gamma,\ldots$ of the greek alphabet. The language is then given by the following grammar:
\[
\begin{array}{llllr}
\text{stacks}    & \quad \pi,\varpi & \gramm & \nil    \mid \ga \mid \cdr \pi \mid M \at \pi & \qquad\qquad\qquad\qquad\qquad  \\
\text{terms}     & \quad M,N        & \gramm & \car\pi \mid \bd\ga.P &  \\
\text{processes} & \quad P,Q        & \gramm & M \ap \pi &  \\
\end{array}
\]
We use letters $E,E'$ to range over \emph{expressions} which are either stacks, terms or processes. The operator $\bd$ is a binder. 
An occurrence of a variable $\ga$ in an expression $E$ is \emph{bound} if it is under the scope of a $\bd\ga$; the set $\FV(E)$ of
 \emph{free variables} is made of those variables having a non-bound occurrence in $E$.

\noindent\emph{Stacks} represent lists of terms: $\nil$ is the empty stack. A stack $M_1\at \cdots \at M_k \at \nil$, stands for a finite
 list while a stack $M_1\at \cdots \at M_k \at \ga$ stands for a non-terminated list that can be further extended.
\emph{Terms} are entities that represent the ``active part'' of computations. A term $\bd\alpha.P$ is the \emph{$\bd$-abstraction} of $\alpha$ in $P$.\\
\emph{Processes} result from the \emph{application} $M\ap\pi$ of a term $M$ to a stack $\pi$. This application, unlike in the $\lambda$-calculus, has
 to be thought as \emph{exhaustive}: an application of a stack to a term is an evolving entity that does not have any outcome.\\

As usual, the calculus involves a substitution operator. By $E\sub{\pi}{\ga}$ we denote the substitution of the stack $\pi$ for all free occurrences
 of $\ga$ in $E$ (paying attention to avoid capture of free variables). A basic but useful fact about substitutions is the \emph{substitution lemma}
 (see \cite{CaEhSa12}): for all $E\in\KTer{e}$ and all $\pi,\varpi\in\KTer{s}$ with $\alpha\not\in\FV(\varpi)$ and $\ga \not\equiv \gb$ we have
 $E\sub{\pi}{\ga}\sub{\varpi}{\gb} \equiv E\sub{\varpi}{\gb}\sub{\pi\sub{\varpi}{\gb}}{\ga}$ (the symbol `$\equiv$' stands for syntactic equality).

The reduction rules characterizing the stack calculus are the following ones:
% \framebox[235pt][l]{
\[
\begin{array}{rl}
(\bd)          & \quad (\bd{\ga}.P)\ap\pi \labelto{\bd} P\sub{\pi}{\ga} \qquad\qquad\qquad\qquad\qquad\qquad\qquad\qquad\qquad \\
(\mathsf{car}) & \quad \car{M\at\pi}      \labelto{\mathsf{car}} M \\
(\mathsf{cdr}) & \quad \cdr{M\at\pi}      \labelto{\mathsf{cdr}} \pi
\end{array}
\]
% } 
% \\
Adding the following rules we obtain the \emph{extensional} stack calculus:
% \framebox[235pt][l]{
\[
\begin{array}{rlr}
(\eo) & \quad \bd{\ga}.M\ap\ga  \labelto{\eo} M   & \text{ if } \ga \not\in \FV(M) \qquad\qquad\qquad\qquad\qquad\qquad\\
(\et) & \quad \car\pi\at\cdr\pi \labelto{\et} \pi & 
\end{array}
\]
% }

We simply write $\labelto{\s}$ for the contextual closure of the relation $(\labelto{\bd} \cup \labelto{\mathsf{car}} \cup \labelto{\mathsf{cdr}})$.
 Moreover we write $\labelto{\eta}$ for the contextual closure of the relation $(\labelto{\eo} \cup \labelto{\et})$ and finally we set
 $\labelto{\s\eta} = (\labelto{\s} \cup \labelto{\eta})$. We denote by $\mslabelto{\s}$ (resp. $\mslabelto{\s\eta}$) the reflexive and transitive closure
 of $\labelto{\s}$ (resp. $\labelto{\s\eta}$) and we denote by $=_{\s}$ (resp. $=_{\s\eta}$) the reflexive, symmetric, and transitive closure
 of $\labelto{\s}$ (resp. $\labelto{\s\eta}$).

An example of term is $\bI \ass \bd \ga.\car\ga \ap \cdr\ga$ (the symbol `$\ass$' stands for definitional equality). 
 For example $\bI\ap\bI\at\nil \labelto{\s} \bI\ap\nil \labelto{\s} \car\nil\ap\cdr\nil$ and the reduction does not proceed further.
 If $\go \ass \bd \ga.\car{\ga}\ap\ga$, then $\go\ap\go\at\nil \labelto{\s} \go\ap\go\at\nil$; this is an example of a non-normalizing process.
 The stack calculus enjoys confluence, even in its extensional version, as stated in the following theorem.

\begin{theorem}[\cite{CaEhSa12}]\label{thm:CR-bd-original}
The reductions $\labelto{\s}$ and $\labelto{\s\eta}$ are both Church-Rosser.
\end{theorem}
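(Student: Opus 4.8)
The plan is to prove that both $\labelto{\s}$ and $\labelto{\s\eta}$ are Church--Rosser by the Tait--Martin-L\"of / Takahashi method of \emph{parallel reduction}, run twice: once for the non-extensional system, and then, as a conservative extension, for the extensional one.

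First I would introduce a parallel reduction $\Rrightarrow$ on expressions, defined by induction so that it is reflexive on variables and on $\nil$, is closed under all the term/stack/process constructors, and contracts any chosen set of $(\bd)$-, $(\mathsf{car})$- and $(\mathsf{cdr})$-redexes in parallel; the three redex clauses read: from $P\Rrightarrow P'$ and $\pi\Rrightarrow\pi'$ infer $(\bd\ga.P)\ap\pi\Rrightarrow P'\sub{\pi'}{\ga}$; from $M\Rrightarrow M'$ infer $\car{M\at\pi}\Rrightarrow M'$; from $\pi\Rrightarrow\pi'$ infer $\cdr{M\at\pi}\Rrightarrow\pi'$. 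A routine induction gives $\labelto{\s}\,\subseteq\,\Rrightarrow\,\subseteq\,\mslabelto{\s}$, so $\mslabelto{\s}$ is exactly the transitive closure of $\Rrightarrow$ and it suffices to prove that $\Rrightarrow$ has the diamond property. I would obtain this, Takahashi-style, from a \emph{triangle property}. After proving the substitution lemma for parallel reduction --- $E\Rrightarrow E'$ and $\pi\Rrightarrow\pi'$ imply $E\sub{\pi}{\ga}\Rrightarrow E'\sub{\pi'}{\ga}$, by induction on $E\Rrightarrow E'$, using in the $(\bd)$-case the ordinary substitution lemma quoted above --- I would define for each expression $E$ its \emph{complete development} $E^{\ast}$ (by recursion on $E$, contracting every redex currently present in $E$) and prove that $E\Rrightarrow E'$ entails $E'\Rrightarrow E^{\ast}$, again by induction on the derivation of $E\Rrightarrow E'$ (the substitution lemma handling the $(\bd)$-case, where the reduct $P'\sub{\pi'}{\ga}$ must parallel-reduce to $P^{\ast}\sub{\pi^{\ast}}{\ga}$, the complete development of $(\bd\ga.P)\ap\pi$). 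Then if $E\Rrightarrow E_1$ and $E\Rrightarrow E_2$ we get $E_1\Rrightarrow E^{\ast}$ and $E_2\Rrightarrow E^{\ast}$: $\Rrightarrow$ is diamond, hence $\mslabelto{\s}$ is confluent, i.e. $\labelto{\s}$ is Church--Rosser.

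For the extensional system I would add to $\Rrightarrow$ one parallel clause per extensionality rule: from $M\Rrightarrow M'$ with $\ga\notin\FV(M)$ infer $\bd\ga.(M\ap\ga)\Rrightarrow M'$, and from $\pi\Rrightarrow\pi'$ infer $\car\pi\at\cdr\pi\Rrightarrow\pi'$. The sandwich $\labelto{\s\eta}\,\subseteq\,\Rrightarrow\,\subseteq\,\mslabelto{\s\eta}$ and the substitution lemma go through unchanged; what must change is the complete development, which now has to \emph{commit to the outermost extensionality step}: one sets $(\bd\ga.(M\ap\ga))^{\ast}:=M^{\ast}$ whenever $\ga\notin\FV(M)$ and $(\car\pi\at\cdr\pi)^{\ast}:=\pi^{\ast}$, and re-proves the triangle property with the two extra inductive cases. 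Granting this, the diamond property of the extended $\Rrightarrow$ follows exactly as before, so $\labelto{\s\eta}$ is Church--Rosser too.

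The main obstacle is precisely this last triangle property, because the two new redex shapes overlap the old ones: the body $M\ap\ga$ of an $(\eo)$-redex $\bd\ga.(M\ap\ga)$ may itself be a $(\bd)$-redex (when $M=\bd\gb.P$), and the subterms $\car\pi$, $\cdr\pi$ of an $(\et)$-redex may be $(\mathsf{car})$- and $(\mathsf{cdr})$-redexes (when $\pi$ has the form $M\at\varpi$). One must check that \emph{every} one-step parallel reduct of such an $E$ still reaches $E^{\ast}$ --- for instance that $\bd\ga.(P'\sub{\ga}{\gb})$, obtained by contracting the inner $(\bd)$-redex, parallel-reduces to $(\bd\gb.P)^{\ast}$ --- which needs the substitution lemma together with some $\alpha$-conversion bookkeeping; getting these diagrams right is the only genuinely delicate point. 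As an alternative one can treat $\labelto{\s}$ and $\labelto{\eta}$ separately: $\labelto{\eta}$ is strongly normalizing (each $(\eo)$- or $(\et)$-step strictly shrinks the expression) and its finitely many critical pairs close at once, so it is confluent by Newman's lemma; one then shows that $\mslabelto{\s}$ and $\mslabelto{\eta}$ commute by inspecting the relative positions of an $s$-redex and an $\eta$-redex (an $\eta$-postponement analysis) and concludes by the Hindley--Rosen lemma --- trading the triangle-property subtlety for the commutation case analysis, with essentially the same overlaps doing the work.
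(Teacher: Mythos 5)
The paper never proves this theorem: it is imported verbatim from \cite{CaEhSa12} (``Theorem [\cite{CaEhSa12}]''), so there is no in-paper argument to compare yours against. Judged on its own terms, your treatment of $\labelto{\s}$ is fine --- the rules $(\bd)$, $(\mathsf{car})$, $(\mathsf{cdr})$ form an orthogonal (left-linear, critical-pair-free) system and the Tait--Martin-L\"of development goes through exactly as you describe.

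The extensional half has a genuine gap, and it is not the one you flag. The overlaps you single out ($\eo$ against $\bd$, $\et$ against $\mathsf{car}/\mathsf{cdr}$) all close benignly. The real obstruction is that $(\et)$ is \emph{not left-linear}: its left-hand side $\car\pi\at\cdr\pi$ contains two occurrences of $\pi$ --- the shape of surjective pairing. This outright falsifies the diamond and triangle properties of your extended parallel reduction. Concretely, take $\pi\ass\cdr{(M\at\gb)}$ and $E\ass\car\pi\at\cdr\pi$, so $E^{\ast}=\pi^{\ast}=\gb$; by congruence $E\Rrightarrow\car{\gb}\at\cdr{\pi}$ (reducing only the left copy), and this term has no root $\et$-redex, so a single further parallel step reaches at best $\car\gb\at\cdr\gb$, never $\gb$. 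In general, whenever $\pi\Rrightarrow\pi_1$ and $\pi\Rrightarrow\pi_2$ with $\pi_1\not\equiv\pi_2$, the reduct $\car{\pi_1}\at\cdr{\pi_2}$ needs \emph{two} parallel steps (equalize, then contract) to rejoin $\pi^{\ast}$. The same duplication defeats your fallback: $\mslabelto{\s}$ and $\mslabelto{\eta}$ do \emph{not} commute in the sense required by Hindley--Rosen, since in the peak $\pi\ \labelot{\eta}\ \car\pi\at\cdr\pi\ \labelto{\s}\ \car{\pi'}\at\cdr\pi$ the right leg can only be closed by further $\s$-steps, not $\eta$-steps. Repairing this needs an additional device --- e.g.\ a parallel reduction whose $\et$-clause fires on $\car{\pi_1}\at\cdr{\pi_2}$ whenever $\pi_1,\pi_2$ have a common reduct, an $\eta$-postponement argument with a weakened commutation, or decreasing diagrams --- and this non-left-linearity is precisely where the cited proof must do nontrivial work (recall that for the $\lambda$-calculus the analogous rule, surjective pairing, destroys confluence altogether). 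As written, your argument does not establish the Church--Rosser property for $\labelto{\s\eta}$.
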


It seems natural to define a meta-language with constructions $\cddr{\pi}{n} \ass \mathsf{cdr}(\cdots\mathsf{cdr}(\pi)\cdots)$ ($n$ times)
 and $\cadr{\pi}{n} \ass \car{\cddr{\pi}{n}}$. It is easily checkable that every expression $E$ has a $\mslabelto{\mathsf{car},\mathsf{cdr}}$-normal
 form, that we will refer to as the \emph{canonical form} of $E$. If $\veN = N_1,\ldots,N_m$ is a sequence of terms, we write $\veN \at \cddr{\gc}{k}$ and $\veN \at \cddr{\nil}{k}$ for the obvious corresponding
 stacks. In an expression in canonical form the stacks have either shape $\veN \at \cddr{\gc}{k}$ or $\veN \at \cddr{\nil}{k}$ and the 
 non-abstraction terms have shape either $\cadr{\ga}{n}$ or $\cadr{\nil}{n}$.

%%%%%%%%%%%%%%%%%%%%%%%%%%%%%%%%%%%%%%%%%%%%%%%%%%%%%%%%%%%%%%%%%%%%%%%%%%%%%%%%%%%%%%%%%%%
\subsection{Operational equivalence and failure of B\"{o}hm's theorem}\label{subsec:outer-red}
%%%%%%%%%%%%%%%%%%%%%%%%%%%%%%%%%%%%%%%%%%%%%%%%%%%%%%%%%%%%%%%%%%%%%%%%%%%%%%%%%%%%%%%%%%%

We provide a notion of \emph{outer reduction} for the stack calculus, obtained by performing the contraction of outer-most redexes only.
 The one-step \emph{outer-reduction} on terms is given by the following rule:
 $$ M \labelto{o} \bd\ga.P\sub{\pi}{\gb} \qquad \text{ if } \bd\ga.(\bd \gb.P) \ap \pi \text{ is the canonical form of } M $$
 %followed by a complete $\mslabelto{\mathsf{car},\mathsf{cdr}}$-normalization.

Note that outer-reduction is \emph{deterministic} and we don't take any contextual closure for $\labelto{o}$; its reflexive and
 transitive closure is denoted by $\mslabelto{o}$. A term $M$ \emph{is in outer-normal form} (\emph{onf}, for short) if it is not
 $\labelto{o}$-reducible. It is straightforward to see that a term is in onf iff it has the form $\bd\ga.H\ap \veN\at\gt$, where the terms in the
 sequence $\veN$ are arbitrary, $H$ is either $\cadr{\gb}{n}$ (in which case we say that the onf is \emph{proper}) or $\cadr{\nil}{n}$
 (in which case we say that the onf is \emph{improper}) and $\gt$ is either $\cddr{\gc}{n}$ or $\cddr{\nil}{n}$; $H$ is the head of the onf and
 $\gt$ is the tail of the onf in question.
 A term \emph{has} a onf if it $\mslabelto{o}$-reduces to a term in onf. Of course the outer reduction strategy is complete for finding
 onf's of terms, i.e., if $M$ $\mslabelto{\s}$-reduces to a term $N'$ in onf, then $M$ reduces to some onf $N''$ via outer reduction.

A \emph{head context} is a context generated by the grammar $C[\cdot] \gramm [\cdot] \mid \bd\ga.C[\cdot]\ap\pi$.

\begin{definition}[Operational equivalence]\label{def:op-equiv}
Two terms $M,N$ are \emph{operationally equivalent}, notation $M \opeq N$, if for every head context $C[\cdot]$ we have that
 $C[M]$ has a proper onf iff $C[N]$ has a proper onf.
\end{definition}

Operational equivalence is a fairly common notion. In Definition \ref{def:op-equiv} we only quantify over head contexts, but it can be shown
 (as it is done for the $\lambda$-calculus \cite{Bare84} and the $\lambda\mu$-calculus \cite{David01}) that a quantification over \emph{all}
 contexts gives as equivalent definition.

% The next lemma is useful because it says that the quantification over contexts in Definition \ref{def:op-equiv} can be limited to the subset of all
%  head contexts.
% 
% \begin{lemma}\label{lem:op-equiv2}
% $M \opeq N$ iff for every head context $C[\cdot]$ we have that
% $$C[M] \text{ has a proper onf } \iff \ C[N] \text{ has a proper onf }$$
% \end{lemma}

We now define some important terms: $\bT\ass\bd\ga.\cadr{\ga}{0}\ap\cddr{\ga}{2}$, $\bF\ass\bd\ga.\cadr{\ga}{1}\ap\cddr{\ga}{2}$, and 
 $\Omega\ass\bd\gc.\go\ap\go\at\gc$, where $\go\ass\bd\ga.\cadr{\ga}{0}\ap\ga$. Clearly $\Omega$ is an example of term without onf
 and we have $\cadr{\nil}{0} \opeq \Omega \not\opeq \bT \not\opeq \bF$.

We conclude the first part of the paper showing that in the stack calculus there exist different $\labelto{\s\eta}$-normal forms which are
 operationally equivalent. This situation is in contrast with the $\lambda$-calculus: the original B\"{o}hm's theorem \cite{Bohm68} implies that
 two different $\beta\eta$-normal $\lambda$-terms $M$ and $N$ are never operationally equivalent because there exists a context $C[\cdot]$ such that
 $C[M]$ has a head normal form and $C[N]$ does not have a head normal form. However an analogous situation occurs for the $\lambda\mu$-calculus:
 David and Py \cite{David01} exhibited two extentionally different normal $\lambda\mu$-terms which are operationally equivalent.
 In fact the counterexample given in the next theorem is obtained by translating David and Py's terms into the stack calculus.

% \newcounter{wrapper-exists}
% \setcounter{wrapper-exists}{\value{theorem}}
\begin{theorem}\label{thm:wrapper-exists}
Let $U \ass \bd\gc.\cadr{\ga}{0}\ap\ga$ and
 \mbox{$W[\cdot]\ass \bd\ga.\cadr{\ga}{0}\ap (\bd\gb.\cadr{\ga}{0}\ap U\at[\cdot]\at \ga)\at U \at \ga$}.
 Then for all terms $M,N$ we have \mbox{$W[M] \opeq W[N]$}.
\end{theorem}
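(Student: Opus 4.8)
The plan is to show that no head context can distinguish $W[M]$ from $W[N]$ by exhibiting the outer-reduction behaviour of $C[W[M]]$ explicitly and observing that it does not depend on $M$. The key point is that $W[\cdot]$ immediately discards whatever occupies its hole: consider the canonical form of $W[M]$. Its head is $\cadr{\ga}{0} = \car\ga$, so $W[M]\ap\pi$ outer-reduces by reading the first slot of $\pi$ — and the hole $[\cdot]$ sits inside the stack that is passed along, not in head position. So first I would compute, for an arbitrary head context $C[\cdot] = \bd\gd_1.(\cdots\bd\gd_n.([\cdot]\ap\varpi_n)\cdots)\ap\varpi_1$, the outer-normal form (if any) of $C[W[M]]$, tracking where the occurrence of $M$ ends up.

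The main technical step is the following substitution/stability lemma: whenever $W[M]$ (or more generally a term built from the $W[\cdot]$-pattern with $M$ buried at positive depth inside a stack) is plugged into a head context and outer-reduced, the copies of $M$ that are produced never migrate into head position — they are always carried inside the tail part $\veN\at\gt$ of some onf, or they are duplicated/erased by $\bd$-reductions whose bound variable only occurs in stack positions. Concretely: when we fire $W[M]\ap\pi \to$ (reading $\pi$), the term $M$ reappears only under the inner $\bd\gb$ and under $\car\ga$-applications, i.e. always strictly inside a stack argument, never as the head $H$ of an onf. I would prove by induction on the length of the outer-reduction sequence that this invariant — "$M$ occurs only in argument sub-stacks of the current term, and the head symbol of the current term is independent of $M$" — is preserved. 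Given the invariant, $C[W[M]]$ reaches a proper onf if and only if $C[W[N]]$ does, because at each outer step the redex contracted and the resulting head are the same for both, and the only difference between the two computations lives inside stacks that are either eventually erased by a $\car/\cdr$ projection onto a different slot, or copied verbatim.

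The hard part will be bookkeeping the occurrences of $M$ through nested $\bd$-substitutions: a $\bd$-reduction $(\bd\gb.P)\ap\pi \to P\sub{\pi}{\gb}$ can duplicate $\pi$, hence duplicate $M$, and can also move $M$-containing stacks into new syntactic positions, so "argument position" has to be formulated carefully (e.g. via the canonical-form description: in a canonical form the only non-abstraction term in head position is $\cadr{\ga}{n}$ or $\cadr{\nil}{n}$, and I must show the $n$ and the variable are $M$-independent at every stage). The cleanest way is probably to define, for a term $T$ and a "marked" subterm position, the notion that the position is \emph{guarded} (lies under at least one $\cdot\at\cdot$ on the argument side or under a $\car$/$\cdr$ applied to a stack containing it), show $W[M]$ has $M$ guarded, and show guardedness is preserved by $\labelto{o}$ while the head of the onf is computed without ever inspecting a guarded position. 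Then observe that having a \emph{proper} onf is a property of the sequence of heads encountered, so $W[M]\opeq W[N]$ follows; the quantification over all head contexts is handled uniformly since the invariant is stated for $C[W[M]]$ with $C$ arbitrary. Finally, since the terms $U,W[\cdot]$ above are exactly the stack-calculus translations of David and Py's $\lambda\mu$-terms, one may alternatively invoke their result via the faithful translation of \cite{CaEhSa12}, but the direct argument sketched here is self-contained.
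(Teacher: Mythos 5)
Your overall strategy coincides with the paper's: compute the outer reduction of $C[W[M]]$ for an arbitrary head context and argue that no residual of $M$ is ever the active part of a contracted redex, so that the reduction paths of $C[W[M]]$ and $C[W[N]]$ are isomorphic and properness of the onf is decided independently of what sits in the hole. The gap lies in how you propose to establish this. The invariant you want to push through the induction --- ``$M$ occurs only in argument sub-stacks (is guarded) and the head is $M$-independent'' --- is \emph{not} preserved by a single outer step for a general term: an outer step followed by canonicalization can perfectly well promote a guarded occurrence to head position, e.g.\ $\bd\ga.(\bd\gb.\cadr{\gb}{1}\ap\varpi)\ap(P\at M\at\rho)$ outer-reduces to $\bd\ga.M\ap\varpi\sub{P\at M\at\rho}{\gb}$ after computing $\cadr{P\at M\at\rho}{1}$. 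Guardedness in your sense is exactly what a $\mathsf{car}$-selection destroys, so the induction as you set it up does not close; that the invariant happens to hold for $W[\cdot]$ is a consequence of the specific shape of $W$, not of any generic stability of ``argument position'' under $\labelto{o}$.

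What is actually needed --- and what the paper supplies --- is a case analysis on the selector $\cadr{\pi'}{0}$ (the first term of the ambient stack after the context's substitutions): not an abstraction; improper head; head variable distinct from the bound one; $\bd\epsilon.\cadr{\epsilon}{0}\ap\varpi$; $\bd\epsilon.\cadr{\epsilon}{1}\ap\varpi$; $\bd\epsilon.\cadr{\epsilon}{n}\ap\varpi$ with $n\geq 2$. The point of the construction is that $M$ sits at index $1$ of the stack $U\at M\at\ga$, which is exposed to the selector only \emph{after} that same selector has already been applied once to $(\bd\gb.\cadr{\ga}{0}\ap U\at M\at\ga)\at U\at\ga$, whose index $1$ is $U$. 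Hence if the selector reads index $1$ it hits $U$, which discards its argument before $M$'s stack is ever produced; if it reads index $0$ twice it hits the $\gb$-abstraction and then $U$, both of which discard; if it reads index $n\geq 2$ it lands in a stack coming from the context, which never places a residual of $M$ in head position. This interplay between the two discarding terms at indices $0$ and $1$ is the combinatorial heart of the David--Py counterexample, and it is precisely the step your sketch leaves unproved. Once that case analysis is in place, your conclusion (isomorphic reduction paths, hence $W[M]\opeq W[N]$) goes through as in the paper.
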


\begin{proof}
Let $M$ be a term and let $C[\cdot] \ass \bd\gd.(\bd\gd_1.\cdots(\bd\gd_m.[\cdot]\ap\pi)\ap\pi_m\cdots)\ap\pi_1$ be a term context. Let
\[
\begin{array}{ll}
M' \ass M\sub{\pi_1}{\gd_1}\cdots\sub{\pi_m}{\gd_m} \qquad \qquad 
 & \pi' \ass \pi\sub{\pi_1}{\gd_1}\cdots\sub{\pi_m}{\gd_m} \\
U' \ass U\sub{\pi'}{\ga} 
 & \pi'' \ass (\bd\gb.\cadr{\pi'}{0}\ap U'\at M'\at \pi') \at U' \at \pi' \\
\end{array}
\]
Then $C[W[M]] \mslabelto{o} \bd\gd.\cadr{\pi'}{0}\ap \pi''$. At this point we distinguish six possible cases:
\begin{enumerate}[(1)]
\item $\cadr{\pi'}{0}$ is not an abstraction;
\item $\cadr{\pi'}{0}\equiv \bd\epsilon.\cadr{\nil}{n}\ap\varpi$;
\item $\cadr{\pi'}{0}\equiv \bd\epsilon.\cadr{\epsilon'}{n}\ap\varpi$, with $\epsilon \not\equiv \epsilon'$;
\item $\cadr{\pi'}{0}\equiv \bd\epsilon.\cadr{\epsilon}{0}\ap\varpi$;
\item $\cadr{\pi'}{0}\equiv \bd\epsilon.\cadr{\epsilon}{1}\ap\varpi$;
\item $\cadr{\pi'}{0}\equiv \bd\epsilon.\cadr{\epsilon}{n}\ap\varpi$, with $n \geq 2$.
\end{enumerate}
% Note that in any case the occurrence of $M'$ is not present in $\cadr{\pi'}{0}$.
% In cases (1)-(3) the outer reduction proceeds
%  for at most one step and the occurrence of $M'$ does not reach the head position, as desired. 
According to the above cases, the outer reduction of $C[W[M]]$ proceeds as:
\begin{enumerate}[(1)]
\item $\ldots \mslabelto{o} \bd\gd.\cadr{\pi'}{0}\ap \pi''\not\labelto{o}$.
\item $\ldots \mslabelto{o} \bd\gd.\cadr{\nil}{n}\ap\varpi\sub{\pi''}{\epsilon} \not\labelto{o} $.
\item $\ldots \mslabelto{o} \bd\gd.\cadr{\epsilon'}{n}\ap\varpi\sub{\pi''}{\epsilon} \not\labelto{o} $.
\item $\ldots \mslabelto{o} \bd\gd.(\bd\epsilon.\cadr{\epsilon}{0}\ap\varpi)\ap \pi'' \mslabelto{o} \bd\gd.\cadr{\pi'}{0}\ap\pi'$.
\item $\ldots \mslabelto{o} \bd\gd.(\bd\epsilon.\cadr{\epsilon}{1}\ap\varpi)\ap \pi'' \mslabelto{o} \bd\gd.\cadr{\pi'}{0}\ap\pi'$.
\item $\ldots \mslabelto{o} \bd\gd.(\bd\epsilon.\cadr{\epsilon}{n}\ap\varpi)\ap \pi'' \mslabelto{o} \bd\gd.\cadr{\pi'}{n-2} \ap\varpi\sub{\pi''}{\epsilon}$.
\end{enumerate}
Suppose $C[W[M]]$ has a proper onf, say, $Z$. In each of the above cases there is no step in the outer reduction path $C[W[M]] \mslabelto{o} Z$
 such that a substitution instance of the occurrence of $M$ put into the hole is active part of a contracted redex. Therefore an isomorphic
 outer-reduction path takes any other term $C[W[N]]$ to its onf, which must be proper too.
\end{proof}

%The proof is reported in Appendix A.

\textbf{Failure of B\"{o}hm's Theorem.} Theorem \ref{thm:wrapper-exists} implies that B\"{o}hm's theorem fails (and quite violently)
 in the stack calculus. Every pair $M,N$ of distinct $\labelto{\s\eta}$-normal forms yields a pair $W[M],W[N]$ of distinct
 $\labelto{\s\eta}$-normal forms which are operationally equivalent.
 
A \emph{stack-theory} (resp. \emph{extensional stack-theory}) for the stack calculus is any set $\cT$ of equalities between stack-expressions
 containing $=_{\s}$ (resp. $=_{\s\eta}$) and closed under context formation and replacement of $\cT$-equal sub-expressions. We indicate by 
 $=_{\cT}$ the congruence associated to the theory $\cT$.

A stack-theory $\cT$ is \emph{inconsistent} if for every pair of terms $M,N$ and every variable $\gc\not\in \FV(M) \cup \FV(N)$ we have that
 $\bd\gc.M\ap\gc =_\cT \bd\gc.N\ap\gc$; $\cT$ is \emph{consistent} otherwise. Since there are distinct $\labelto{\s\eta}$-normal forms,
 Theorem \ref{thm:CR-bd} implies that $=_{\s}$ and $=_{\s\eta}$ are consistent equational theories, of which $=_{\s\eta}$ is extensional.
 Also the relation $\opeq$ is an extensional equational theory which is consistent, because for example $\bT\not\opeq\Omega$.

A theory is \emph{Hilbert\textendash Post complete} (HP-complete, for short) if it is maximally consistent (cannot be properly extended to a consistent theory). The next
 theorem shows that operational equivalence is maximally consistent.

% \newcounter{HP-complete}
% \setcounter{HP-complete}{\value{theorem}}
% \begin{theorem}\label{thm:HP-complete}
% The relation $\opeq$ is an Hilbert\textendash Post complete equational theory for the stack calculus.
% \end{theorem}
\begin{theorem}\label{thm:HP-complete}
The relation $\opeq$ is an HP-complete equational theory for the stack calculus.
\end{theorem}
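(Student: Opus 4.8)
The plan is to prove that $\opeq$ is maximal among consistent stack-theories; since $\opeq$ is itself consistent (e.g. $\bT\not\opeq\Omega$), this is exactly HP-completeness. So let $\cT$ be a stack-theory with $\opeq\subsetneq\cT$; then some terms satisfy $M_0=_\cT N_0$ while $M_0\not\opeq N_0$, and we must show $\cT$ is inconsistent. By Definition~\ref{def:op-equiv} a head context $C[\cdot]$ witnesses $M_0\not\opeq N_0$, say $C[M_0]$ has a proper onf and $C[N_0]$ does not; since $\cT$ is closed under context formation, $C[M_0]=_\cT C[N_0]$, so after replacing $M_0,N_0$ by these we may assume there are terms $M=_\cT N$ with $M$ having a proper onf and $N$ having none.

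The argument rests on two lemmas. The first is a restricted B\"ohm-out: \emph{if $X$ has a proper onf, there is a head context $D[\cdot]$ with $D[X]\mslabelto{\s}\bT$} (hence $D[X]=_\cT\bT$). Write the proper onf of $X$ as $\bd\ga.\cadr{\gb}{n}\ap\veR\at\gt$ and split on whether the head variable $\gb$ is the binder $\ga$ or is free in $X$. If $\gb\equiv\ga$, let $D[\cdot]\ass\bd\gc.[\cdot]\ap\sigma$ where $\sigma=\Omega\at\cdots\at\Omega\at(\bd\gd.\bT\ap\gc)\at\gc$ has $n$ leading copies of $\Omega$ and $\gc,\gd$ are fresh; the outer $\bd$-step substitutes $\sigma$ for $\ga$, so $\cadr{\cdot}{n}$ selects the planted term $\bd\gd.\bT\ap\gc$, one more $\bd$-step discards the leftover arguments, leaving $\bd\gc.\bT\ap\gc$, and since $\bT$ is a closed abstraction and $\gc$ is fresh a final $\bd$-step returns $\bT$. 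If $\gb$ is free, first compose with $\bd\gb.[\cdot]\ap\nil$ to capture it (so $D[\cdot]\ass\bd\gc.(\bd\gb.[\cdot]\ap\nil)\ap\sigma$); now the outer $\bd$-step substitutes $\sigma$ for $\gb$ inside $X$ and the computation proceeds as before. The only bookkeeping is the substitution lemma and the commutation of outer reduction with substitution for stack variables.

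The second lemma is a preservation statement: \emph{if $Y$ has no proper onf, then $D[Y]$ has no proper onf for every head context $D[\cdot]$}; equivalently $Y\opeq\Omega$, because $\Omega$ and all of its head-context instances merely loop under outer reduction and so never reach a proper onf. This is the stack-calculus analogue of the fact that unsolvable $\lambda$-terms remain unsolvable under head contexts, and it is where the real work lies. I would prove it by induction on $D[\cdot]$, reducing to the inductive step $D[\cdot]=\bd\ga.D'[\cdot]\ap\pi$ and, via the induction hypothesis, to showing that $\bd\ga.Y'\ap\pi$ has no proper onf whenever $Y'$ has none. If $Y'$ has no onf at all then, using canonical forms and the substitution lemma, one checks that the outer reduction of $\bd\ga.Y'\ap\pi$ steps through the (infinite) outer reduction of $Y'$ up to bounded overhead and so diverges; if $Y'$ has an improper onf $\bd\gb.\cadr{\nil}{m}\ap\veS\at\gt$, then $\bd\ga.Y'\ap\pi\mslabelto{o}\bd\ga.\cadr{\nil}{m}\ap(\veS\at\gt)\sub{\pi}{\gb}$, which is again improper since substituting a stack for $\gb$ cannot turn $\cadr{\nil}{m}$ into a variable head. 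The $\alpha$-renaming and canonical-form bookkeeping of the ``no onf'' case, together with handling degenerate ``bare projection'' terms, is the step I expect to be the main obstacle.

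Granting the two lemmas, the proof concludes quickly. By the first lemma there is a head context $D[\cdot]$ with $D[M]=_\cT\bT$; since $M=_\cT N$ and $\cT$ is a congruence, $\bT=_\cT D[M]=_\cT D[N]$, and by the second lemma (and $\opeq\subseteq\cT$) $D[N]=_\cT\Omega$, whence $\bT=_\cT\Omega$. Finally, take arbitrary terms $P,Q$ and a variable $\gc\notin\FV(P)\cup\FV(Q)$. One computes $\bd\gc.\bT\ap(P\at\nil\at\gc)\mslabelto{\s}\bd\gc.P\ap\gc$ (the $\bd$-redex contracts, with $\cadr{\cdot}{0}$ returning $P$ and $\cddr{\cdot}{2}$ returning $\gc$), so by $\bT=_\cT\Omega$ and congruence $\bd\gc.P\ap\gc=_\cT\bd\gc.\Omega\ap(P\at\nil\at\gc)$, and this last term loops under outer reduction (as do all its head-context instances), hence is $\opeq\Omega$ and thus $=_\cT\Omega$. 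Symmetrically $\bd\gc.Q\ap\gc=_\cT\Omega$, so $\bd\gc.P\ap\gc=_\cT\bd\gc.Q\ap\gc$: $\cT$ is inconsistent, contradicting the choice of $\cT$. Therefore no consistent stack-theory properly extends $\opeq$, i.e. $\opeq$ is HP-complete.
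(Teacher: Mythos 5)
Your proof is correct and follows essentially the same route as the paper's: use the witnessing head context, B\"{o}hm-out the side with a proper onf down to a fixed convergent term, observe that the other side stays without a proper onf under all head contexts and is therefore operationally (hence $\cT$-) equal to a fixed divergent term, and derive the collapse of all $\bd\gc.Z\ap\gc$ --- you merely make explicit, as two lemmas, the B\"{o}hm-out and preservation-of-divergence steps that the paper asserts tersely. The only slip is the stack $P\at\nil\at\gc$, which is ill-formed because $\nil$ is a stack rather than a term; replacing it by $P\at\bI\at\gc$ fixes this and the computation goes through unchanged.
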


\begin{proof}
The relation $\opeq$ is indeed an equational theory for the stack calculus, because it is closed w.r.t. context formation.
 Suppose, by contradiction, that $\cT$ is a consistent theory that contains $\opeq$ properly and let $M,N$ be terms such that
 $M \not\opeq N$ and $M =_{\cT} N$.%by Lemma \ref{lem:op-equiv2}
 Then there exists a head context $C[\cdot]$ such that, say, $C[M]$ has a proper onf and $C[N]$ does not have a proper onf.
 Assume $C[M] =_{\s} \bd\ga.\cadr{\gb}{n}\ap \veN \at \cddr{\gc}{k}$ and define the context
 $C'[\cdot] \ass \bd\epsilon.(\bd\gb.[\cdot]\ap \ga)\ap \underbrace{\bI\at\ldots\at\bI}_{n}\at (\bd \gd.\bI \ap \epsilon)\at\epsilon$.
 Then $C'[C[M]] =_{\s} \bI$. On the other hand $C'[C[N]]$ cannot have a proper onf.
 Let $u \ass \bd x.\cadr{f}{0}\ap(\bd\gb.\cadr{x}{0}\ap\cadr{x}{0}\at\gb)\at\cddr{x}{1}$, $U \ass \bd\gc.u\ap u\at\gc$ and 
 $Y \ass \bd f.U\ap \cddr{f}{1}$. Finally set $\bT_\infty \ass \bd\gd.Y\ap\bT\at\gd$.

Since $\bT_\infty$ does not have an onf we have $C'[C[N]] \opeq \bT_\infty$ and from $\bI =_{\s} C'[C[M]] =_{\cT} C'[C[N]]$ and
 the fact that the congruence ${=_\cT}$ extends both $=_{\s}$ and $\opeq$, we get that $\bI =_{\cT} \bT_\infty$. Now take an
 arbitrary term $Z$ and a variable $\gc\not\in\FV(Z)$. We have
 $\bd\gc.Z\ap\gc =_{\s} \bd\gc.\bI \ap Z\at\gc =_{\cT} \bd\gc.\bT_\infty \ap Z\at\gc =_{\s} \bT_\infty$.
 Since the congruence ${=_\cT}$ extends both $=_{\s}$ and $\opeq$, we can conclude that $\bd\gc.Z\ap\gc =_{\cT} \bd\gc.Z'\ap\gc$ for all terms
 $Z,Z'$ and every variable $\gc\not\in \FV(Z) \cup \FV(Z')$. Therefore $\cT$ is inconsistent.
\end{proof}

%%%%%%%%%%%%%%%%%%%%%%%%%%%%%%%%%%%%%%%%%%%%%%%%%%%%%%%%%%%%%%%%%%%%%%%%%%%%%%%%%%%%%%%%%%%%%%%%%%%%%%%%%%%%%%%%%%%%%%%%%%%%%%%%%%%%%%%%%%%
\section{The extended stack calculus}\label{sec:extended-calc}
%%%%%%%%%%%%%%%%%%%%%%%%%%%%%%%%%%%%%%%%%%%%%%%%%%%%%%%%%%%%%%%%%%%%%%%%%%%%%%%%%%%%%%%%%%%%%%%%%%%%%%%%%%%%%%%%%%%%%%%%%%%%%%%%%%%%%%%%%%%

The \emph{extended stack calculus} is a super-language of the stack calculus. Formally, it is obtained by incorporating the syntactic
 category of processes into that of terms. Therefore the grammar for the extended language is the following one:
\[
\begin{array}{llllr}
\text{stacks} & \quad \pi,\varpi & \gramm & \nil \mid \ga \mid \cdr \pi \mid M \at \pi & \qquad\qquad\qquad\qquad\qquad  \\
\text{terms}  & \quad M,N        & \gramm & \car \pi \mid \bd \ga.M \mid M \ap \pi & 
\end{array}
\]

We still use letters $E,E'$ to range over expressions which are either stacks or terms. An example of term which belongs to the extended
 language but not to the original one is $\bd \gb.\bd \ga.\car\ga \ap \cdr\ga$. Application associates to the left, so that $M \ap \pi \ap \varpi$ stands for
 $(M \ap \pi) \ap \varpi$ and application has precedence over $\bd$-abstraction. If $\seq\pi = \pi_1,\ldots,\pi_m$ and $\seq \ga = \ga_1,\ldots,\ga_n$, we shall
 abbreviate the term $\bd\ga_1\ldots\bd\ga_n.(M \ap \pi_1 \ap \cdots \ap \pi_m)$ as $\bd\seq\ga.M\ap \seq\pi$. We denote by $\KTer{t}$, $\KTer{s}$
 and $\KTer{e}$ the sets of all terms, stacks and expressions respectively. We still use $\labelto{\s}$, $\labelto{\s\eta}$, $=_{\s}$ and $=_{\s\eta}$ to indicate the straightforward extensions of the corresponding relations
 defined for the stack calculus. Also the notion of canonical form extends straightforwardly to the extended stack calculus.
 
\begin{theorem}\label{thm:CR-bd}
The reductions $\labelto{\s}$ and $\labelto{\s\eta}$ in the extended stack calculus are both Church-Rosser.
\end{theorem}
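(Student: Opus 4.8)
The plan is to reduce confluence of the extended calculus to confluence of the original one (Theorem \ref{thm:CR-bd-original}), since the extended calculus merely collapses the category of processes into that of terms and the reduction rules are literally the same. The cleanest route is the classical Tait–Martin-Löf technique: define a parallel reduction relation $\Rightarrow$ on expressions of $\KTer{e}$ that contracts a set of mutually disjoint $\bd/\car/\cdr$-redexes (and, for the extensional version, $\eta_1/\eta_2$-redexes) in one step, show that $\labelto{\s}\subseteq{\Rightarrow}\subseteq\mslabelto{\s}$, and then establish that $\Rightarrow$ enjoys the diamond property. Confluence of $\mslabelto{\s}$ follows by the usual interpolation argument. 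Alternatively — and this is probably the shorter write-up for a conference paper — one observes that there is an obvious syntactic embedding of the extended stack calculus into the original one: replace each "bare" application $M\ap\pi$ occurring in term position by $\bd\gc.M\ap\pi\at\cdots$, or more simply note that an extended term is just a stack-calculus expression in which the outermost process constructor has been allowed to nest; a structural translation $(-)^\flat$ commuting with reduction up to $=_\s$ transports confluence back from Theorem \ref{thm:CR-bd-original}.

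Concretely, first I would define parallel reduction $\Rightarrow$ by induction on expressions: it is reflexive on variables and $\nil$; it is a congruence for $\car{-}$, $\cdr{-}$, $M\at\pi$, $\bd\ga.-$ and $-\ap\pi$; and it has the key redex clauses $(\bd\ga.P)\ap\pi \Rightarrow P'\sub{\pi'}{\ga}$ whenever $P\Rightarrow P'$ and $\pi\Rightarrow\pi'$, together with $\car{M\at\pi}\Rightarrow M'$ and $\cdr{M\at\pi}\Rightarrow\pi'$ under the analogous premises. Second, I would prove the substitution compatibility lemma: if $E\Rightarrow E'$ and $\pi\Rightarrow\pi'$ then $E\sub{\pi}{\ga}\Rightarrow E'\sub{\pi'}{\ga}$; this uses the substitution lemma quoted in Section \ref{sec:untyped_stack} and is a routine induction on the derivation of $E\Rightarrow E'$. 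Third, I would establish the triangle property: every $E$ has a complete development $E^\ast$ (contract all redexes marked, innermost-out) such that $E\Rightarrow F$ implies $F\Rightarrow E^\ast$; the diamond property, hence local confluence of $\Rightarrow$, is an immediate corollary. Fourth, from ${\labelto{\s}}\subseteq{\Rightarrow}\subseteq{\mslabelto{\s}}$ one gets $\mslabelto{\s}={\Rightarrow}^\ast$ and therefore $\mslabelto{\s}$ is Church-Rosser.

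For the extensional reductions $\labelto{\s\eta}$ I would extend $\Rightarrow$ with the clauses $\bd\ga.M\ap\ga\Rightarrow M'$ (when $\ga\notin\FV(M)$ and $M\Rightarrow M'$) and $\car\pi\at\cdr\pi\Rightarrow\pi'$, and rerun the same development argument; the only subtlety is the interaction of $\eta_1$ with the $\bd$-rule and of $\eta_2$ with $\car/\cdr$, which creates a small number of critical-pair overlaps that one checks close in one parallel step. I expect the main obstacle to be exactly this bookkeeping of $\eta$-overlaps together with the variable-freshness side condition on $\eta_1$ — in particular making sure that marking an $\eta_1$-redex and then substituting does not accidentally capture $\ga$ — but this is the same difficulty already resolved for the non-extended calculus in \cite{CaEhSa12}, and since the grammar change from the stack calculus to the extended stack calculus adds no new redex shapes (it only relaxes where a $-\ap\pi$ node may sit in a term), the critical-pair analysis is word-for-word the one in the cited proof. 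Hence the theorem follows by the same argument, mutatis mutandis.
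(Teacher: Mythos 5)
Your proposal is essentially the paper's own argument: the paper gives no details, stating only that the proof is an easy modification of the confluence proof for the original stack calculus in \cite{CaEhSa12}, and your central observation --- that passing to the extended grammar adds no new redex shapes, only new positions where an application node may sit, so the parallel-reduction and critical-pair analysis of the cited proof carries over verbatim --- is precisely the content of that remark. The Tait--Martin-L\"{o}f development you sketch is the standard way to make this precise; only your parenthetical alternative (a syntactic embedding back into the original calculus commuting with reduction) is dubious as stated and should be dropped, but nothing in your main line of argument rests on it.
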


The proof of Theorem \ref{thm:CR-bd} is an easy modification of that of Theorem \ref{thm:CR-bd-original} (see \cite{CaEhSa12}).

Also the definitions of (consistent) equational theories and HP-completeness extend straightforwardly to the larger calculus and once again Theorem \ref{thm:CR-bd}
 guarantees the consistency of the theories $=_{\s}$ and $=_{\s\eta}$.
 
In this paper we are not concerned with semantics. We just mention that the works of Streicher and Reus \cite{Streicher98} and
 Nakazawa and Katsumata \cite{Naka12} already provide sound models for the extended stack calculus, which are the same as those
 for the $\Lambda\mu$-calculus.
% \newcounter{CR-bd}
% \setcounter{CR-bd}{\value{theorem}}
% \begin{theorem}\label{thm:CR-bd}
% The $\labelto{\s}$-reduction is Church-Rosser.
% \end{theorem}

%The proof is reported in Appendix B.

% \newcounter{CR-bd-eta}
% \setcounter{CR-bd-eta}{\value{theorem}}
% \begin{theorem}\label{thm:CR-bd-eta}
% The $\labelto{\s\eta}$-reduction is Church-Rosser.
% \end{theorem}

%The proof is reported in Appendix B.

%%%%%%%%%%%%%%%%%%%%%%%%%%%%%%%%%%%%%%%%%%%%%%%%%%%%%%%%%%%%%%%%%%%%%%%%%%%%%%%%%%%%%%%%%%%%%%%%%%%%%%%%%%%%%%%%%%%%%%%%%%%%%%%%%%%%%%%%%%%
\subsection{Head-reduction and operational equivalence for the extended calculus}\label{subsec:head-red}
%%%%%%%%%%%%%%%%%%%%%%%%%%%%%%%%%%%%%%%%%%%%%%%%%%%%%%%%%%%%%%%%%%%%%%%%%%%%%%%%%%%%%%%%%%%%%%%%%%%%%%%%%%%%%%%%%%%%%%%%%%%%%%%%%%%%%%%%%%%

We now provide a notion of \emph{head-reduction} for the stack calculus, which is performed by contracting the left-most redex only.
 The one-step \emph{head-reduction} on terms is given by the following rule:

$$ M \labelto{h} \bd\seq \ga.N\sub{\varpi}{\gb} \ap \seq \pi \qquad \text{ if } \bd\seq \ga.(\bd \gb.N) \ap \varpi \ap \seq \pi \text{ is the canonical form of } M $$
%  $$\bd\seq \ga.(\bd \gb.P) \ap \varpi \ap \seq \pi \labelto{h} \bd\seq \ga.P\sub{\varpi}{\gb} \ap \seq \pi \qquad\qquad\qquad\qquad\qquad\qquad $$
% followed by a complete $\mslabelto{\mathsf{car},\mathsf{cdr}}$-normalization.

Note that head-reduction is \emph{deterministic} and we don't take any contextual closure for $\labelto{h}$; its reflexive and
 transitive closure is denoted by $\mslabelto{h}$. A term $M$ \emph{is in head-normal form} (\emph{hnf}, for short) if it is not
 $\labelto{h}$-reducible. It is straightforward to see that a term is in hnf iff it has the form $\bd\seq\ga.H\ap \seq\pi$, where the stacks in the
 sequence $\seq \pi$ are arbitrary, $H$ is either $\cadr{\gb}{n}$ (in which case we say that the hnf is \emph{proper}) or $\cadr{\nil}{n}$
 (in which case we say that the hnf is \emph{improper}); $H$ is the head of the hnf in question.
 A term \emph{has} a hnf if it $\mslabelto{h}$-reduces to a term in hnf. Of course the head-reduction strategy is complete for finding
 hnf's of terms, i.e., if $M$ $\mslabelto{\s}$-reduces to a term $N'$ in hnf, then $M$ reduces to some hnf $N''$ via head-reduction.
 
For convenience we define a partial function \mbox{$\hnf{\cdot}: \KTer{t} \rightharpoonup \KTer{t}$} which returns the $\labelto{h}$-normal form of
 a term, if it exists.

The \emph{head contexts} of the extended stack calculus are produced by the following grammar:
$$ C[\cdot] \gramm [\cdot] \mid C[\cdot] \ap \pi \mid \bd\ga.C[\cdot] $$

The next definition is the analogue of Definition \ref{def:op-equiv} for the extended stack calculus.

\begin{definition}[Operational equivalence (extended)]\label{def:op-equiv-extended}
Two terms $M,N$ are \emph{operationally equivalent}, notation $M \opeq N$, if for every head context $C[\cdot]$ we have that
 $C[M]$ has a proper hnf iff $C[N]$ has a proper hnf.
\end{definition}
 
We use the same symbol as in Definition \ref{def:op-equiv} because there will be no ambiguity: from now on we are only concerned with the extended
 stack calculus. The relation $\opeq$ is a consistent extensional theory which is HP-complete, because the proof of Theorem \ref{thm:HP-complete}
 works also for the larger calculus.

\begin{theorem}\label{thm:HP-complete-extended}
The relation $\opeq$ is an HP-complete equational theory for the extended stack calculus.
\end{theorem}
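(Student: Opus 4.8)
The plan is to rerun the proof of Theorem~\ref{thm:HP-complete} inside the extended calculus, systematically replacing "onf" by "hnf" and "head context" by the extended head contexts of Section~\ref{subsec:head-red}, and to check that each ingredient survives the passage to the larger language. To start with I would record that $\opeq$ is genuinely an equational theory for the extended calculus: it is an equivalence relation by definition, it contains $=_{\s}$ (since $=_{\s}$ is preserved by every context and already forces both sides to agree on possessing a proper hnf), and it is a congruence — closure under the extended head contexts is immediate from Definition~\ref{def:op-equiv-extended}, and, just as for Definition~\ref{def:op-equiv} (see the remark following it), quantifying over \emph{all} contexts there yields the same relation. Consistency is witnessed, e.g., by $\bT \not\opeq \Omega$, because $\bT$ is already a proper hnf while $\Omega$ has none.

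Then I would argue by contradiction. Suppose $\cT$ is a consistent theory that contains $\opeq$ properly, pick terms $M =_{\cT} N$ with $M \not\opeq N$, and, swapping $M$ and $N$ if necessary, fix an extended head context $C[\cdot]$ with $C[M]$ having a proper hnf and $C[N]$ having none. Write that proper hnf as $\bd\seq\ga.\cadr{\gb}{n}\ap\seq\pi$ with $\seq\pi = \pi_1,\ldots,\pi_m$ and $\gb$ a variable. The core step is a one-binder Böhm-out: starting from an $=_{\s}$-expansion of $C[M]$ to this hnf, abstract the head variable $\gb$, feed $[\cdot]$ enough dummy stacks (say copies of $\nil$) to consume the prefix $\bd\seq\ga$ and the applications $\ap\seq\pi$, and substitute for $\gb$ a stack whose $n$-th $\mathsf{cdr}$-section has head $\bd\gd_1.\cdots\bd\gd_m.\bI\ap\epsilon$; the induced $\bd$- and $\mathsf{car}/\mathsf{cdr}$-reductions then collapse the whole term, under a leading $\bd\epsilon$, to $\bI$ up to renaming. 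Hence there is an extended head context $C'[\cdot]$ with $C'[C[M]] =_{\s} \bI$, while $C'[C[N]]$ still has no proper hnf — because $C'$ is built from extended head contexts and "having no proper hnf" is preserved by them (a term without hnf stays without hnf, and an improper hnf stays improper).

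Finally I would reuse the looping term $\bT_\infty$ from the proof of Theorem~\ref{thm:HP-complete}, namely $u \ass \bd x.\cadr{f}{0}\ap(\bd\gb.\cadr{x}{0}\ap\cadr{x}{0}\at\gb)\at\cddr{x}{1}$, $U \ass \bd\gc.u\ap u\at\gc$, $Y \ass \bd f.U\ap\cddr{f}{1}$ and $\bT_\infty \ass \bd\gd.Y\ap\bT\at\gd$, checking as there that $\bT_\infty$ has no hnf; hence $C'[C[N]] \opeq \bT_\infty$. Then $\bI =_{\s} C'[C[M]] =_{\cT} C'[C[N]] =_{\cT} \bT_\infty$, and for any term $Z$ and a fresh $\gc$ one gets $\bd\gc.Z\ap\gc =_{\s} \bd\gc.\bI\ap Z\at\gc =_{\cT} \bd\gc.\bT_\infty\ap Z\at\gc =_{\s} \bT_\infty$; thus all the terms $\bd\gc.Z\ap\gc$ are $\cT$-equal, i.e.\ $\cT$ is inconsistent, contradicting the choice of $\cT$.

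The step I expect to be the genuine obstacle is the Böhm-out construction: in the extended calculus a proper hnf carries an arbitrarily long $\bd$-prefix and a genuine sequence of stack-applications, $\bd\seq\ga.H\ap\seq\pi$, rather than the single-abstraction/single-tail shape $\bd\ga.H\ap\veN\at\gt$ available in Theorem~\ref{thm:HP-complete}, so $C'$ must be tuned to the two lengths $|\seq\ga|$ and $m$, and one must follow the bound variables through the resulting substitutions to make sure nothing is captured and no $\mathsf{car}/\mathsf{cdr}$ gets stuck before $\bI$ is reached. This is mild, since the single-binder syntax keeps the argument close to the $\lambda$-calculus one; everything else — consistency via $\bT\not\opeq\Omega$, invariance of "no proper hnf" under extended head contexts, and the closing inconsistency computation — is verbatim the proof of Theorem~\ref{thm:HP-complete} with "hnf" in place of "onf".
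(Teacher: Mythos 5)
Your proposal is essentially the paper's own argument: the paper gives no separate proof of this theorem, stating only that the proof of Theorem~\ref{thm:HP-complete} carries over to the extended calculus, and your rerun of that proof with ``hnf'' for ``onf'' and the context tuned to the shape $\bd\seq\ga.\cadr{\gb}{n}\ap\pi_1\aps\pi_m$ is exactly what is intended. One detail to repair in your B\"{o}hm-out step: the dummy stacks fed to $[\cdot]$ should be the bound variables $\ga_1,\ldots,\ga_k$ themselves (as in $[\cdot]\ap\ga$ in the proof of Theorem~\ref{thm:HP-complete} and in case (1) of Theorem~\ref{thm:10.4.1}), not copies of $\nil$ --- otherwise, when the head variable $\gb$ happens to be one of the $\ga_i$, it gets replaced by $\nil$ before the outer $\bd\gb$ can act, the head becomes the improper $\cadr{\nil}{n}$, and the reduction to $\bI$ fails.
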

% \newcounter{HP-complete}
% \setcounter{HP-complete}{\value{theorem}}
% \begin{theorem}\label{thm:HP-complete}
% The relation $\opeq$ is an Hilbert\textendash Post complete equational theory for the stack calculus.
% \end{theorem}
% \begin{theorem}\label{thm:HP-complete}
% The relation $\opeq$ is an Hilbert\textendash Post complete equational theory for the stack calculus.
% \end{theorem}
 % $$
% \hnf{M} =
% \begin{cases}
% N & \mbox{if $M \mslabelto{h} N \not\labelto{h}$} \\
% \mbox{undefined} & \mbox{if there is no $N$ such that $M \mslabelto{h} N \not\labelto{h}$} \\
% \end{cases}
% $$

%%%%%%%%%%%%%%%%%%%%%%%%%%%%%%%%%%%%%%%%%%%%%%%%%%%%%%%%%%%%%%%%%%%%%%%%%%%%%%%%%%%%%%%%%%%%%%%%%%%%%%%%%%%%%%%%%%%%%%%%%%%%%%%%%%%%%%%%
\subsection{Similarity and separability}\label{sec:similarity}
%%%%%%%%%%%%%%%%%%%%%%%%%%%%%%%%%%%%%%%%%%%%%%%%%%%%%%%%%%%%%%%%%%%%%%%%%%%%%%%%%%%%%%%%%%%%%%%%%%%%%%%%%%%%%%%%%%%%%%%%%%%%%%%%%%%%%%%%

%  Moreover Theorem \ref{thm:HP-complete} also holds for the extended calculus.
% \begin{lemma}\label{lem:unsolvability-preserved}
% Let $M$ be a term and let $C[\cdot]$ be a head context. If $M$ does not have a hnf, then $C[M]$ does not have a hnf too.
% \end{lemma}
% 
% \begin{proof}
% It suffices to observe that if $M \labelto{h} N$, then also $C[M] \labelto{h} C[N]$. Therefore if $M$ has an infinite head reduction path,
%  then so does $C[M]$.
% \end{proof} 

The following definition introduces an important concept, somewhat orthogonal to operational equivalence, which is very typical of
 functional calculi (as the $\lambda$-calculus).

\begin{definition}[Separability]\label{def:separability}
We say that $M$ and $N$ are \emph{separable} if there exists a head context $C[\cdot]$ such that \mbox{$C[M] =_{\s} \bT$} and
 \mbox{$C[N] =_{\s} \bF$}.
\end{definition}

The following theorem says that separability and operational equivalence are somewhat orthogonal to each other.

\begin{theorem}\label{thm:sepa-nopeq}
If $M$ and $N$ are separable, then $M \not\opeq N$.
\end{theorem}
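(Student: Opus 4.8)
The plan is to show that if $M$ and $N$ are separable, then there is some head context witnessing that $M$ and $N$ are not operationally equivalent. Suppose $C[\cdot]$ is a head context with $C[M] =_{\s} \bT$ and $C[N] =_{\s} \bF$. The key observation is that $\bT$ and $\bF$ are both proper head-normal forms (their heads are $\cadr{\ga}{0}$ and $\cadr{\ga}{1}$ respectively), so $C[\cdot]$ alone does \emph{not} separate them in the sense of Definition~\ref{def:op-equiv-extended} — we need to post-compose with a further head context that sends $\bT$ to something with a proper hnf and $\bF$ to something without one (or vice versa). So first I would exhibit a head context $D[\cdot]$ such that $D[\bT]$ has a proper hnf while $D[\bF]$ does not; then the composite $D[C[\cdot]]$, which is again a head context, gives $D[C[M]] =_{\s} D[\bT]$ and $D[C[N]] =_{\s} D[\bF]$, and by Church-Rosser (Theorem~\ref{thm:CR-bd}) together with completeness of head-reduction for finding hnf's, $D[C[M]]$ has a proper hnf and $D[C[N]]$ does not, so $M \not\opeq N$.

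The heart of the argument is thus the construction of $D[\cdot]$. Recall $\bT = \bd\ga.\cadr{\ga}{0}\ap\cddr{\ga}{2}$ and $\bF = \bd\ga.\cadr{\ga}{1}\ap\cddr{\ga}{2}$: applied to a stack $M_0 \at M_1 \at \pi$, the term $\bT$ selects $M_0$ (applied to $\pi$) and $\bF$ selects $M_1$ (applied to $\pi$). So I would take $D[\cdot] \ass \bd\gc.[\cdot] \ap (\bd\gd.\cadr{\nil}{0}) \at \Omega' \at \gc$ where $\Omega'$ is a closed term without hnf (for instance $\bd\gd.\Omega$, with $\Omega$ as defined just before Theorem~\ref{thm:wrapper-exists}) and $\bd\gd.\cadr{\nil}{0}$ is chosen so that after feeding it the remaining stack $\gc$ the head-reduction reaches the improper-but-head-normal term $\cadr{\nil}{0}$ — wait, that would be improper, so instead I want the first slot to reduce to a proper hnf: take $D[\cdot] \ass \bd\gc.[\cdot]\ap \bI' \at \Omega' \at \gc$ where $\bI' \ass \bd\gd.\cadr{\gd}{0}\ap\gd$ reduces, when applied to anything, to a proper hnf, and $\Omega'$ is a closed term without hnf. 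Then $D[\bT] \mslabelto{h} \bI'\ap\gc \mslabelto{h} \cadr{\gc}{0}\ap\gc$, a proper hnf; whereas $D[\bF] \mslabelto{h} \Omega'\ap\gc$, which has no hnf. (One must check that $\Omega'$ remains hnf-free after being applied to the stack $\gc$; this holds because the head-reduction of $\Omega'\ap\gc$ just unfolds the non-terminating inner redex.)

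The remaining steps are routine bookkeeping: verifying that $D[C[\cdot]]$ is a head context according to the grammar $C[\cdot] \gramm [\cdot] \mid C[\cdot]\ap\pi \mid \bd\ga.C[\cdot]$ (it is, being built from $[\cdot]$ by the same constructors), checking the two head-reduction computations above explicitly using the $(\bd)$, $(\mathsf{car})$, $(\mathsf{cdr})$ rules and the canonical-form machinery, and invoking Theorem~\ref{thm:CR-bd} to transfer ``$=_{\s} \bT$'' to ``has the same hnf as $\bT$.'' The one genuine subtlety — the main obstacle — is making sure the two auxiliary terms plugged into the stack actually do their job after substitution and application: that $\bI'$ applied to an arbitrary stack always produces a \emph{proper} hnf (which it does, since its head $\cadr{\gd}{0}$ becomes a free variable once $\gd$ is bound by the surrounding $\bd\gc$... actually $\gd$ is bound inside $\bI'$, so I should instead use $\bd\gd.\cadr{\gc}{0}\ap\gd$ with $\gc$ the outer variable, or simply note that a one-step head reduction of $\bI' \ap \pi$ gives $\cadr{\pi}{0}\ap\cdr{\pi}{0}$ which is head-normal and proper whenever $\pi$ is a variable), and that $\Omega'$ genuinely never reaches a hnf even in the extended calculus. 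Once these two facts are nailed down, the theorem follows immediately.
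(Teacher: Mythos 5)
Your proof is correct and follows essentially the same route as the paper: the paper also post-composes the separating context with a single further head context, namely $C'[\cdot] \ass \bd\epsilon.C[\cdot]\ap\Omega\at\bI\at\epsilon$, so that $\bT$ selects the hnf-free term and $\bF$ selects a term with a proper hnf (you merely swap the order of the two stack entries and use trivial variants of $\bI$ and $\Omega$). The side conditions you flag (properness of $\cadr{\gc}{0}\ap\gc$, persistence of hnf-freeness under application, transfer of ``has a proper hnf'' along $=_{\s}$ via Church--Rosser) all check out.
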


\begin{proof}
Suppose $M,N$ separable. Then there exists a context $C[\cdot]$ such that $C[M] =_{\s} \bT$ and $C[N] =_{\s} \bF$ and setting
 $C'[\cdot] \ass \bd\epsilon.C[\cdot]\ap\Omega\at\bI\at\epsilon$ we obtain that $C'[M] =_{\s} \Omega$ and $C'[N] =_{\s} \bI$, thus showing
 that $M \not\opeq N$.
\end{proof}

The converse of Theorem \ref{thm:sepa-nopeq} does not hold. For example $\Omega \not\opeq \bI$ but it is also true that
 $\Omega$ and $\bI$ are not separable. In fact if a term $M$ does not have a hnf, then $C[M]$ does not have a hnf too for every head context
 $C[\cdot]$. This means that for no head context $C[\cdot]$ we can have $C[\Omega] =_{\s} \bT$ (or $C[\Omega] =_{\s} \bF$).

Next we introduce the notion of \emph{similarity} between stacks and terms which, for terms having an hnf, is weaker than inseparability. 

\begin{definition}[Similarity for stacks]\label{def:similarity-stacks}
We define the similarity relation $\ssim$ on $\KTer{s}$ as the smallest equivalence relation closed under $=_{\s}$ satisfying the
 following conditions:
\begin{enumerate}[(1)]
\item $\pi \ssim M_1\at \ldots \at M_m \at \cddr{\nil}{k}$
\item if $k-m = k'-m'$, then $M_1\at \ldots \at M_m \at \cddr{\gc}{k} \ssim N_1\at \ldots \at N_{m'} \at \cddr{\gc}{k'}$
\end{enumerate}
\end{definition}

\begin{definition}[Similarity for terms]\label{def:similarity}
We define the similarity relation $\sim$ on $\KTer{t}$ as the smallest equivalence relation closed under $=_{\s}$ satisfying the
 following conditions:
\begin{enumerate}[(1)]
\item if $k-m = k'-m'$, $\pi_i \ssim \varpi_i$ for all $i = 1,\ldots,min\{m,m'\}$, and
 \mbox{$\varpi_{min\{m,m'\}+j}\ssim\ga_{min\{k,k'\}+j}$} for all $j = 1,\ldots,(max\{k,k'\}-min\{k,k'\})$ then
$$ \bd\ga_1\ldots\ga_k.\cadr{\gb}{n} \ap \pi_1 \aps \pi_m \sim \bd\ga_1\ldots\ga_{k'}.\cadr{\gb}{n} \ap \varpi_1 \aps \varpi_{m'} $$
\item if $\hnf{M}$ and $\hnf{N}$ are both defined and improper, then $M \sim N$
\item if $\hnf{M}$ and $\hnf{N}$ are both undefined, then $M \sim N$
\end{enumerate}
\end{definition}

%The following theorem shows that dissimilarity of terms having a proper hnf implies their separability. Its proof is given in Appendix B.

The following theorems show that dissimilarity of terms having a hnf implies their separability. We start by treating a particular
 case, which is nevertheless non-trivial.

It will occur frequently to use stacks of the form $\overbrace{M\at\ldots\at M}^{n}\at\pi$. Therefore we set the special notation
 $M^{n}\at\pi$ for these stacks.

\begin{theorem}\label{thm:10.4.1-particular}
Let $M \equiv \bd\ga.\cadr{\gb}{n}\ap\pi$ and $N \equiv \bd\ga.\cadr{\gb'}{n'}\ap\pi'$. If $M \not\sim N$, then $M$ and $N$ are separable.
\end{theorem}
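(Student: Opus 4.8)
The plan is to produce, according to how $M\not\sim N$ fails, a head context $C[\cdot]$ with $C[M]\mslabelto{\s}\bT$ and $C[N]\mslabelto{\s}\bF$, which immediately give $C[M]=_{\s}\bT$ and $C[N]=_{\s}\bF$, i.e.\ separability. First I would unwind the hypothesis. Both $M$ and $N$ are already proper hnf's with a single $\bd$-binder and a single application, so the only clause of Definition~\ref{def:similarity} that can give $M\sim N$ is clause~(1), which for terms of this shape boils down to the three conditions $\gb\equiv\gb'$, $n=n'$, and $\pi\ssim\pi'$. Hence $M\not\sim N$ means one of: (i)~$n\neq n'$; (ii)~$n=n'$ but $\gb\not\equiv\gb'$ (possibly one of $\gb,\gb'$ being the bound variable $\ga$ and the other free, or two distinct free variables); (iii)~$\gb\equiv\gb'$, $n=n'$, but $\pi$ and $\pi'$ are not $\ssim$-related. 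Cases (i)--(ii) together say precisely that the heads $\cadr{\gb}{n}$ and $\cadr{\gb'}{n'}$ are different expressions, while in (iii) the heads coincide and only the applied stacks differ.

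The separating gadgets are the closed terms $K^{\bT}\ass\bd\gd.\bT$ and $K^{\bF}\ass\bd\gd.\bF$ ($\gd$ fresh): $K^{\bT}\ap\varpi\mslabelto{\s}\bT$ and $K^{\bF}\ap\varpi\mslabelto{\s}\bF$ for any stack $\varpi$. For cases (i)--(ii) I would feed $M$'s outer binder $\ga$ a single stack — this feeding must happen exactly once, so as to consume $\bd\ga$ without leaving a spurious leading abstraction that $\bT$ could not match — and arrange that, after this feeding and (if $M$'s head variable is free) after substituting a stack for it, $M$'s head has become $K^{\bT}$ applied to one stack, so that $C[M]\mslabelto{\s}\bT$; symmetrically for $N$ and $K^{\bF}$. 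Concretely: if $M$'s head variable is $\ga$, feed $M$ the stack $\bI^{n}\at K^{\bT}\at\gamma$ with $\gamma$ fresh, and in the sub-case where $N$'s head variable is also $\ga$ (forcing $n\neq n'$) put $K^{\bF}$ in position $n'$ of that same stack; if $M$'s head variable $\gb$ is free, wrap in $(\bd\gb.[\cdot]\ap\rho_{0})\ap\sigma$ with $\rho_{0}$ arbitrary and $\sigma$ carrying $K^{\bT}$ in position $n$, and likewise substitute for $N$'s free head variable a stack carrying $K^{\bF}$ in position $n'$. Because in (i)--(ii) the two heads project different positions or live on different variables, the operations performed for $M$ and for $N$ do not interfere, provided every auxiliary stack is tailed with a fresh variable so that a substitution aimed at one of $M,N$ does not disturb the head of the other; running the three sub-configurations (two bound heads is impossible, as it forces $\gb\equiv\gb'$; one bound and one free; two distinct free variables) then yields the reductions by direct computation.

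The hard case is (iii): $M$ and $N$ share the head $\cadr{\gb}{n}$, so anything installed there acts on both alike, and the distinction must be squeezed out of the failure of $\pi\ssim\pi'$. The plan is to install, again feeding $\ga$ exactly one stack, a \emph{common} probe $T\ass\bd\gd.\cadr{\gd}{\ell}\ap\sigma_{0}$ at the head position, where $\sigma_{0}$ is a fixed auxiliary stack — for instance $\sigma_{0}\ass K^{\bT}\at K^{\bF}\at\nil$ — chosen so that $\bT\ap\sigma_{0}\mslabelto{\s}\bT$ and $\bF\ap\sigma_{0}\mslabelto{\s}\bF$; and to choose the stacks substituted for the free stack variables of $\pi,\pi'$ (and the one fed to $\ga$) so that the entry in position $\ell$ of the stack reaching $T$ is $\bT$ along the reduction of $C[M]$ and $\bF$ along that of $C[N]$, whence $C[M]\mslabelto{\s}T\ap(\cdots)\mslabelto{\s}\bT\ap\sigma_{0}\mslabelto{\s}\bT$ and symmetrically $C[N]\mslabelto{\s}\bF$. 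Writing the applied stacks in canonical form, $\pi=_{\s}\veN\at\cddr{\gc}{k}$ and $\pi'=_{\s}\veN'\at\cddr{\gc'}{k'}$ (or the $\nil$-terminated analogues), the failure of $\ssim$ says where to aim: if $\gc\not\equiv\gc'$ we substitute independently chosen stacks for $\gc$ and $\gc'$, placing $\bT$ and $\bF$ at the positions $k+\ell-|\veN|$ and $k'+\ell-|\veN'|$ that the two tails read off for $\ell$ large; if $\gc\equiv\gc'$ with $k-|\veN|\neq k'-|\veN'|$, one substituted stack suffices, since the two tails then point to \emph{different} positions of it. The obstacle I expect to dominate the write-up is the bookkeeping certifying that the probed position is a genuine one — lying inside a stack we control, not inside the $\nil$-terminated part of a tail, where the projected entry would be an improper, stuck term incompatible with reaching $\bT$ — which is exactly the content of the case-(iii) hypothesis (and where the precise formulation of Definition~\ref{def:similarity-stacks} for $\nil$-terminated stacks, and a careful choice of $\ell$, come in). Once this is arranged, the reductions are routine, and this base case becomes the engine of the general B\"ohm-out argument.
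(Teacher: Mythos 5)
Your reduction of $M\not\sim N$ to the three alternatives is the same as the paper's (the paper orders them as $\gb\not\equiv\gb'$; $\gb\equiv\gb'$ but $n\neq n'$; heads equal but $\pi\not\ssim\pi'$), and your constructions for (i)--(ii) coincide with its cases (1)--(2): plant $\bd\gd.\bT\ap\epsilon$ at position $n$ and $\bd\gd.\bF\ap\epsilon$ at position $n'$ of the stack(s) reaching the head(s). Where you genuinely diverge is the hard case (iii). The paper does \emph{not} probe a specific tail position; it first normalizes the situation through sub-cases (3.1.1), (3.1.2), (3.2.1) so that both tails end in the head variable $\gb$ itself, and then, in (3.2.2), feeds $\gb$ a single stack containing a \emph{counting} gadget: the head becomes $\bd\gd.X\ap\cddr{\gd}{n+1+\max\{m,m'\}}$, the trailing segment of the fed stack is padded with markers $\cadr{\epsilon}{1}$ followed by a sentinel $\cadr{\epsilon}{0}$, and $X$ detects the discrepancy $m-k\neq m'-k'$ by how many markers survive before the sentinel. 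Your alternative --- a projection probe $\bd\gd.\cadr{\gd}{\ell}\ap\sigma_0$ with $\ell$ large, reading a $\bT$/$\bF$ marker planted at position $k+\ell-m$ (resp.\ $k'+\ell-m'$) of the stacks substituted for the tail variables --- is sound: your observation that case (iii) forces both tails to be variable-terminated (clause (1) of Definition~\ref{def:similarity-stacks} makes every stack similar to every $\nil$-terminated one) is exactly the point that guarantees the probed position lies in a stack you control, and taking $\ell$ large resolves all collisions between the probe position $n$ and the marker positions, even when $\gc$ or $\gc'$ coincides with $\gb$ or $\ga$. This buys you a more uniform argument (one gadget, no cascade of sub-case reductions) at the cost of the coincidence bookkeeping you defer; the paper pays the same cost up front by spending its sub-cases on forcing $\gc\equiv\gc'\equiv\gb$ so that a single fed stack does everything. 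What remains for you to write out is only that bookkeeping --- the enumeration of which of $\ga,\gb,\gc,\gc'$ coincide and the corresponding merging of the constraint lists on the substituted stacks --- not any missing idea.
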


\begin{proof}
Assume $M \not\sim N$. We analyze the different reasons for this fact and each time we build a context $C[\cdot]$ such that $C[M] =_{\s} \bT$ and
 $C[N] =_{\s} \bF$. In the rest of the proof we let $\epsilon$ be a fresh variable.
 Since $M$ and $N$ are not similar, we have the following possible cases:
\begin{itemize}
\item[\textbf{(1)}] $\gb \not\equiv \gb'$;
\item[\textbf{(2)}] $\gb \equiv \gb'$ but $n \neq n'$;
\item[\textbf{(3)}] $\gb \equiv \gb'$, $n = n'$ but $\pi \not\ssim \pi'$.
\end{itemize}

\noindent\textbf{(1)} Define $\pi \ass \bI^{n}\at (\bd\gd.\bT \ap\epsilon) \at \epsilon$, where $\gd \not\equiv \epsilon$,\ 
 \mbox{$\pi'\ass\bI^{n'}\at(\bd\gd.\bF \ap\epsilon)\at\epsilon$}, where $\gd \not\equiv \epsilon$, and 
 $C[\cdot] \ass \bd \epsilon.(\bd \gb'.(\bd \gb.[\cdot] \ap \ga) \ap \pi) \ap \pi'$. Then $C[M] =_{\s} \bT$ and $C[N] =_{\s} \bF$. \\

\noindent\textbf{(2)} We can assume w.l.o.g. that $n > n'$ since the opposite case can be treated symmetrically. Define
 \mbox{$\pi \ass \bI^{n'}\at(\bd\gd.\bF\ap\epsilon)\at\bI^{n-n'-1}\at(\bd\gd.\bT\ap\epsilon)\at\epsilon$},
 where $\gd \not\equiv \epsilon$, and $C[\cdot] \ass \bd \epsilon.(\bd \gb.[\cdot] \ap \ga) \ap \pi$. Then $C[M] =_{\s} \bT$ and $C[N] =_{\s} \bF$. \\

In order to treat case \textbf{(3)}, we need to explicit the possible forms of $\pi$ and $\pi'$. By assumption there some are terms
 $M_1,\ldots, M_m,N_1,\ldots,N_{m'}$ and numbers $k,k'\in \nat$ such that:
\begin{itemize}
\item $M \equiv \bd\ga.\cadr{\gb}{n} \ap M_1\at \ldots \at M_m \at \cddr{\gc}{k}$
\item $N \equiv \bd\ga.\cadr{\gb'}{n'} \ap N_1\at \ldots \at N_{m'} \at\cddr{\gc'}{k'}$
\end{itemize}
Case \textbf{(3)} can be exhaustively splitted in the following sub-cases:
\begin{itemize}
\item[\textbf{(3.1)}] $\gb \equiv \gb'$, $n = n'$ but $\gc \not\equiv \gc'$ (no assumptions on $m,k,m',k'$) and
\begin{itemize}
\item[\textbf{(3.1.1)}] $\gc \not\equiv \gb$;
\item[\textbf{(3.1.2)}] $\gc \equiv \gb$.
\end{itemize}
\item[\textbf{(3.2)}] $\gb \equiv \gb'$, $n = n'$, $\gc \equiv \gc'$ but $m-k \neq m'-k'$ and
\begin{itemize}
\item[\textbf{(3.2.1)}] $\gc \not\equiv \gb$;
\item[\textbf{(3.2.2)}] $\gc \equiv \gb$.
\end{itemize}
\end{itemize}

We now show how to reduce the cases (3.1.1), (3.1.2) and (3.2.1) to the case (3.2.2), for which we show how to build the separating context. \\

\noindent\textbf{(3.1.1)} If $\gc' \equiv \gb$, then case \textbf{(3.1.2)} applies (changing the roles of $\gc$ and $\gc'$).
 If $\gc' \not\equiv \gb$, then define $C[\cdot] \ass \bd\ga.(\bd\gc.[\cdot] \ap \ga) \ap \gb$. Now case \textbf{(3.1.2)} applies to $C[M]$ and
 $C[N]$. \\

\noindent\textbf{(3.1.2)} Define \mbox{$C[\cdot] \ass \bd\ga.(\bd\gc'.[\cdot] \ap \ga) \ap \cddr{\gc'}{m'+m+k+1}$}. Now case \textbf{(3.2.2)}
 applies to $C[M]$ and $C[N]$. \\
 
\noindent\textbf{(3.2.1)} Define $C[\cdot] \ass \bd\ga.(\bd\gc.[\cdot] \ap \ga) \ap \gb$. Now case \textbf{(3.2.2)} applies to $C[M]$ and $C[N]$. \\

\noindent\textbf{(3.2.2)} Let $\partial m = |m-m'|$, $\partial k = |k-k'|$, $d = \partial k + \partial m$, and
 $e = min\{\partial k,\partial m\}$. Note that $m-k = m'-k'$ if, and only if, $m-m' = k-k'$. Therefore, under the hypothesis of this case,
 $m-m' \neq k-k'$ and $d >0$. %We can assume w.l.o.g. that $m > m'$, since the other case can be treated symmetrically. Define
 %, $m^+ = max \{m,m'\}$, $m^- = min \{m,m'\}$ $k^+ = max \{k,k'\}$,
$$
X \ass
\begin{cases}
\bd\gd.\cadr{\gd}{0}\ap \cddr{\epsilon}{2} & \text{if $k < k'$} \\
\bd\gd.\cadr{\gd}{e}\ap \cddr{\epsilon}{2} & \text{if $k \geq k'$} \\
\end{cases}
$$
Moreover define 
\mbox{$\pi \ass \bI^{n}\at(\bd\gd.X\ap \cddr{\gd}{n+1+max \{m,m'\}})\at{\cadr{\epsilon}{1}}^{\partial m+max \{k,k'\}} \at \cadr{\epsilon}{0}\at\epsilon$}
 and\\ \mbox{$C[\cdot] \ass \bd \epsilon.(\bd \gb.[\cdot] \ap \ga) \ap \pi$}. We can assume w.l.o.g. that $m \geq m'$ (the other case can be treated symmetrically)
 so that we have
$$
C[M] =_{\s}
\begin{cases}
\bd \epsilon.X\ap \cadr{\epsilon}{1}^{d} \at \cadr{\epsilon}{0}\at\epsilon & \text{if $k < k'$} \\
\bd \epsilon.X\ap \cadr{\epsilon}{1}^{\partial m}\at\cadr{\epsilon}{0}\at \epsilon & \text{if $k \geq k'$} \\
\end{cases}
\quad \text{ and } \quad
C[N] =_{\s}
\begin{cases}
\bd \epsilon.X\ap \cadr{\epsilon}{0}\at \epsilon & \text{if $k < k'$} \\
\bd \epsilon.X\ap \cadr{\epsilon}{1}^{\partial k} \at\cadr{\epsilon}{0}\at \epsilon & \text{if $k \geq k'$} \\
\end{cases}
$$
Concluding the computation we obtain
$$
C[M] =_{\s}
\begin{cases}
\bF & \text{if $k < k'$} \\
\bT & \text{if $k \geq k'$} \\
\end{cases}
\qquad \text{ and } \qquad
C[N] =_{\s}
\begin{cases}
\bT & \text{if $k < k'$} \\
\bF & \text{if $k \geq k'$} \\
\end{cases}
$$
\end{proof}

Now we can now prove the following more general statement.

% \newcounter{10.4.1}
% \setcounter{10.4.1}{\value{theorem}}
\begin{theorem}\label{thm:10.4.1}
Let $M,N$ be terms having a proper hnf. Then $M \not\sim N$ implies that $M$ and $N$ are separable.
\end{theorem}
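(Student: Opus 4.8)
The plan is to reduce everything to constructions already carried out in Theorem~\ref{thm:10.4.1-particular}, via a case analysis on \emph{why} $M$ and $N$ fail to be similar, producing in each case a head context $C[\cdot]$ with $C[M]=_\s\bT$ and $C[N]=_\s\bF$.

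First I would replace $M,N$ by their head normal forms (legitimate, since they have a proper hnf and separability only mentions $=_\s$):
$$M\equiv\bd\ga_1\cdots\ga_k.\cadr{\gb}{n}\ap\pi_1\aps\pi_m,\qquad N\equiv\bd\ga_1\cdots\ga_{k'}.\cadr{\gb'}{n'}\ap\varpi_1\aps\varpi_{m'},$$
the leading binders $\alpha$-renamed to agree as far as possible. The failure of $M\sim N$ is then forced, via clause~(1) of Definition~\ref{def:similarity} (using that $=_\s$-equal proper hnf's share head, degree and balance), by one of: \textbf{(a)} the heads $\gb,\gb'$ are different variables (reading bound occurrences correctly); \textbf{(b)} $\gb\equiv\gb'$ in matching positions but $n\neq n'$; \textbf{(c)} heads and degrees match but $k-m\neq k'-m'$ (or the side-condition $\varpi_{\min\{m,m'\}+j}\ssim\ga_{\min\{k,k'\}+j}$ fails); \textbf{(d)} all of the above match but $\pi_i\not\ssim\varpi_i$ for some $i\le\min\{m,m'\}$, which by Definition~\ref{def:similarity-stacks} means $\pi_i$ and $\varpi_i$ differ in being terminated or not, or have different tail variables, or different balances.

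In every case the separating context is built by the recipe of Theorem~\ref{thm:10.4.1-particular}: feed the leading binders with stacks of the form $\bI^{r}\at Q\at\varpi$, where $r$ is a degree to be stepped over and $Q$ is a $\bd$-combinator absorbing \emph{all} the remaining stack arguments of the term in question, steering $M$ towards $\bT$ and $N$ towards $\bF$; when the head variable under scrutiny is free rather than bound, one first wraps with a context $\bd\gb.[\cdot]\ap\varpi$ of the same kind, as is done with $\gb,\gb'$ in case~\textbf{(1)} there. Cases~\textbf{(a)} and~\textbf{(b)} are then immediate generalisations of cases~\textbf{(1)} and~\textbf{(2)}: in \textbf{(a)} the two distinct head positions receive a $\bT$- and an $\bF$-producing $Q$; in \textbf{(b)} a detector term is planted at degree $\min\{n,n'\}$. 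Case~\textbf{(c)} reuses the counting stack of subcase~\textbf{(3.2.2)}, of shape $\bI^{r}\at(\bd\gd.X\ap\cddr{\gd}{s})\at\cadr{\epsilon}{1}^{t}\at\cadr{\epsilon}{0}\at\epsilon$ with $X$ a test term returning $\bT$ or $\bF$ according to how many $\cadr{\epsilon}{1}$-markers remain, now with parameters tuned to the two balances. Case~\textbf{(d)} is handled by choosing the $Q$'s so that they leave untouched the prefix $\pi_1,\dots,\pi_{i-1}$ (which behaves identically on both sides, these stacks being similar) and apply, in the $i$-th slot, a small term that ``runs'' that stack argument far enough to expose the terminated/non-terminated, tail-variable or balance discrepancy of $\pi_i$ versus $\varpi_i$, again landing on $\bT$ resp.\ $\bF$.

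The main obstacle is the interplay, in cases~\textbf{(c)} and~\textbf{(d)}, between the \emph{varying arities} of $M$ and $N$ and the single context that must separate them: the combinator $Q$ pushed onto each term has to consume a number of arguments that differs on the two sides and nevertheless produce the opposite Booleans. The trailing device $\cadr{\epsilon}{1}^{t}\at\cadr{\epsilon}{0}\at\epsilon$ from subcase~\textbf{(3.2.2)} is what absorbs this slack, but one must recheck that it still behaves correctly in the presence of the extra leading binders and of surviving stack arguments of entirely unknown shape, and — for case~\textbf{(d)} — that the auxiliary probe of a single dissimilar stack argument can be fitted into exactly one argument slot of $Q$ without disturbing the rest of the computation.
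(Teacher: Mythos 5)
Your overall strategy --- pass to the hnf's, do a case analysis on which clause of Definition~\ref{def:similarity}(1) fails, build explicit Boolean-producing contexts for head/degree/balance mismatches, and probe a dissimilar stack argument otherwise --- is the paper's strategy, and cases (a), (b), and most of (c) and (d) go through essentially as in the paper (the paper's case (3) actually uses a simpler projection-plus-padding gadget at top level rather than the counting stack of (3.2.2), but that is cosmetic).

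There is, however, one genuine gap, in your case (c). When heads, degrees and balances all agree ($k-m=k'-m'$) and all $\pi_i\ssim\varpi_i$ for $i\le\min\{m,m'\}$, dissimilarity can still be caused solely by failure of the side condition $\varpi_{m+j}\ssim\ga_{k+j}$, i.e.\ $N$ fails to be an $\eta$-expansion of $M$ in the $j$-th surplus position. You mention this possibility parenthetically under (c), but the gadget you assign to (c) --- the counting stack ``tuned to the two balances'' --- cannot detect it: the balances are equal, so the counter returns the same Boolean on both sides. This subcase needs the extraction technique of your case (d), redirected to position $m+j$: with $X\ass\bd\ga_1\ldots\ga_{m'}\gb''.\cadr{\gb''}{0}\ap\ga_{m+j}$, $\pi'\ass\bI^{n}\at X\at\epsilon$ and $C[\cdot]\ass(\bd\gb.[\cdot]\ap\ga_1\aps\ga_{k'})\ap\pi'$ one gets $C[M]=_{\s}\bd\gb''.\cadr{\gb''}{0}\ap\ga_{k+j}\sub{\pi'}{\gb}$ versus $C[N]=_{\s}\bd\gb''.\cadr{\gb''}{0}\ap\varpi_{m+j}\sub{\pi'}{\gb}$; note that on the $M$ side the compared stack is the freshly supplied $\ga_{k+j}$, which exists only because the context applies all of $\ga_1,\ldots,\ga_{k'}$. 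A second, smaller omission affects this case and your case (d) alike: the extraction inevitably substitutes $\pi'$ for the head variable $\gb$ inside the extracted stacks, so you must record that $\not\ssim$ is closed under substitution (otherwise the extracted stacks could accidentally become similar); with that, Theorem~\ref{thm:10.4.1-particular} applies to the two extracted terms and yields separability --- which is cleaner than trying to land on $\bT$ and $\bF$ directly in a single context, as your sketch of (d) proposes.
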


\begin{proof}
Suppose $M$ and $N$ be terms with a proper hnf and assume $M \not\sim N$. We analyze the different cases in which $M \not\sim N$ and each time
 we build a context $C[\cdot]$ such that $C[M] =_{\s} \bT$ and $C[N] =_{\s} \bF$. Since $M,N$ both have a hnf, let's say we have
\begin{itemize}
\item $\hnf{M} = \bd\ga_1\ldots\ga_k.\cadr{\gb}{n} \ap \pi_1 \aps \pi_m$
\item $\hnf{N} = \bd\ga_1\ldots\ga_{k'}.\cadr{\gb'}{n'} \ap \varpi_1 \aps \varpi_{m'}$
\end{itemize}
In the rest of the proof we let $\epsilon$ be a fresh variable and we suppose, w.l.o.g. that $k'\geq k$. Since $M$ and $N$ are not
 similar, we have the following possible cases:
\begin{itemize}
\item[\textbf{(1)}] $\gb \not\equiv \gb'$;
\item[\textbf{(2)}] $\gb \equiv \gb'$ but $n \neq n'$;
\item[\textbf{(3)}] $\gb \equiv \gb'$, $n = n'$ but $m-k \neq m'-k'$;
\item[\textbf{(4)}] $\gb \equiv \gb'$, $n = n'$, $m-k = m'-k'$ but there is some $i = 1,\ldots,m$ such that $\pi_i \not\ssim \varpi_i$ (note that $m'\geq m$);
\item[\textbf{(5)}] $\gb \equiv \gb'$, $n = n'$, $m-k = m'-k'$, $\pi_i \ssim \varpi_i$ for all $i = 1,\ldots,m$ but there is some
 $j = 1,\ldots,k'-k$ such that $\varpi_{m+j} \not\ssim \ga_{k+j}$ (note that $m'\geq m$).
\end{itemize}

We now show how to build in each case a separating context. \\

\noindent\textbf{(1)} Define \mbox{$\pi \ass \bI^{n}\at (\bd\gd_1\ldots\gd_m\ga_{k+1}\ldots\ga_{k'}.\bT)\at\epsilon$},\ 
 \mbox{$\pi' \ass \bI^{n'}\at (\bd\gd_1\ldots\gd_{m'}.\bF)\at\epsilon$}, and \\
 \mbox{$C[\cdot] \ass (\bd\gb\gb'.[\cdot] \ap \ga_1 \aps \ga_{k'}) \ap \pi \ap \pi'$}. Then $C[M] =_{\s} \bT$ and $C[N] =_{\s} \bF$. \\

\noindent\textbf{(2)} Suppose that $n'> n$ (all other cases can be treated similarly). Define\\
 \mbox{$\pi \ass \bI^{n}\at(\bd\gd_1\ldots\gd_m\ga_{k+1}\ldots\ga_{k'}.\bT)\at\bI^{n'-n-1}\at(\bd\gd_1\ldots\gd_{m'}.\bF)\at\epsilon$}
 and \mbox{$C[\cdot] \ass (\bd \gb.[\cdot] \ap \ga_1 \aps \ga_{k'}) \ap \pi$}. Then $C[M] =_{\s} \bT$ and $C[N] =_{\s} \bF$. \\

\noindent\textbf{(3)} Note that, having assumed $m-k \neq m'-k'$, we have $m + k'-k \neq m'$. Suppose $m + k'-k > m'$ (the opposite
 case can be treated similarly). Now let $p = m + k'-k$ , $h = p - m'$ and define \\
 \mbox{$\pi \ass \bI^{n}\at (\bd\ga_1\ldots\ga_{p+1}.\cadr{\ga_{p+1}}{0}) \at\epsilon$},
 \mbox{$C[\cdot] \ass (\bd \gb.[\cdot] \ap \ga_1 \aps \ga_{k'} \ap \gd \ap \epsilon_1 \aps \epsilon_h) \ap \pi$}, and \\
 \mbox{$C'[\cdot] \ass (\bd\gd\epsilon_h.[\cdot])\ap ((\bd\ga_1\ldots\ga_{h}.\bT)\at\epsilon)\ap(\bF\at\epsilon)$}.
 Then $C'[C[M]] =_{\s} \bT$ and $C'[C[N]] =_{\s} \bF$. \\

\noindent\textbf{(4)} Define \mbox{$X \ass \bd\ga_1\ldots\ga_{m'}\gb''.\cadr{\gb''}{0}\ap \ga_i$},
 \mbox{$\pi' \ass \bI^{n}\at X \at\epsilon$}, and
 \mbox{$C[\cdot] \ass (\bd \gb.[\cdot] \ap \ga_1 \aps \ga_{k'}) \ap \pi'$}. Then we have
$$
 C[M] =_{\s} \bd\gb''.\cadr{\gb''}{0}\ap \pi_i\sub{\pi'}{\gb} 
 \text{ and } 
C[N] =_{\s} \bd\gb''.\cadr{\gb''}{0}\ap \varpi_i\sub{\pi'}{\gb}
$$
Finally since $\pi_i \not\ssim \varpi_i$ and $\not\ssim$ is closed under substitution, we also have
 \mbox{$\pi_i\sub{\pi'}{\gb}\not\ssim \varpi_i\sub{\pi'}{\gb}$}, which in turn implies $C[M] \not\sim C[N]$.
 At this point we can apply Theorem \ref{thm:10.4.1-particular} to show that $C[M]$ and $C[N]$ are separable, and hence $M$ and $N$ are separable.\\

\noindent\textbf{(5)} Define $X \ass \bd\ga_1\ldots\ga_{m'}\gb''.\cadr{\gb''}{0}\ap \ga_{m+j}$,\ 
 $\pi' \ass \bI^{n}\at X \at\epsilon$, and 
 \mbox{$C[\cdot] \ass (\bd \gb.[\cdot] \ap \ga_1 \aps \ga_{k'}) \ap \pi'$}. Then we have
$$
C[M] =_{\s} \bd\gb''.\cadr{\gb''}{0}\ap \ga_{k+j}\sub{\pi'}{\gb}
 \text{ and }
C[N] =_{\s} \bd\gb''.\cadr{\gb''}{0}\ap \varpi_{m+j}\sub{\pi'}{\gb}
$$
Finally since $\varpi_{m+j} \not\ssim \ga_{k+j}$ and $\not\ssim$ is closed under substitution, we also have
 $\varpi_{m+j}\sub{\pi'}{\gb} \not\ssim \ga_{k+j}\sub{\pi'}{\gb}$, which in turn implies $C[M] \not\sim C[N]$.
 At this point we can apply Theorem \ref{thm:10.4.1-particular} to show that $C[M]$ and $C[N]$ are separable, and hence $M$ and $N$ are separable.
\end{proof}

%The proof is reported in Appendix C.

The converse of Theorem \ref{thm:10.4.1} does not hold; for example $\bd\gc.\cadr{\gc}{0}\ap \bT\at\gc$ and $\bd\gc.\cadr{\gc}{0}\ap \bF\at\gc$ are
 separable but it is also true that $\bd\gc.\cadr{\gc}{0}\ap \bT\at\gc \sim \bd\gc.\cadr{\gc}{0}\ap \bF\at\gc$. The point is that the relation $\sim$
 only looks at the ``surface'' of terms, while separation may require to unravel terms by iteratively computing head normal forms, i.e., looking at
 their B\"{o}hm trees.

%%%%%%%%%%%%%%%%%%%%%%%%%%%%%%%%%%%%%%%%%%%%%%%%%%%%%%%%%%%%%%%%%%%%%%%%%%%%%%%%%%%%%%%%%%%%%%%%%%%%%%%%%%%%%%%%%%%%%%%%%%%%%%%%%%%%%%%%%
\subsection{B\"{o}hm out technique and B\"{o}hm's theorem}\label{subsec:Bohm-out}
%%%%%%%%%%%%%%%%%%%%%%%%%%%%%%%%%%%%%%%%%%%%%%%%%%%%%%%%%%%%%%%%%%%%%%%%%%%%%%%%%%%%%%%%%%%%%%%%%%%%%%%%%%%%%%%%%%%%%%%%%%%%%%%%%%%%%%%%%

As for the $\lambda$-calculus, B\"{o}hm trees can be defined for stack terms too. While the nodes of B\"{o}hm trees of $\lambda$-terms
 are indexed by sequences of natural numbers, the nodes of B\"{o}hm trees of stack terms should be indexed by sequences of pairs of natural numbers.
 The reason for this choice is that a child of the node corresponding to a hnf $\bd\seq\ga.\cadr{\gb}{n}\ap\seq \pi$ must be
 selected with two coordinates $(j,j')$, saying that the child is the root of the B\"{o}hm tree of the $j'$-th term of the canonical form
 of the stack $\pi_{j}$.

The B\"{o}hm trees for the $\lambda$-calculus (see \cite[\S~10]{Bare84}) are partial functions mapping sequences of natural numbers
 either to a special constant $\bot$ or to a $\lambda$-term o shape $\lambda\seq x.y$. We will still speak of B\"{o}hm trees for the extended 
 stack calculus, but strictly speaking we mean partial functions mapping pairs of natural numbers to non-necessarily normal terms, called \emph{nodes}.
 Letters $\gs,\gt,\gr,\ldots$ range over the set $\Seq$ of all finite sequences of pairs of strictly positive natural numbers. We define the order
 $<$ on these sequences as follows: $\mbox{$\gt < \gs$ iff $\gt$ is a proper prefix of $\gs$}$. We denote by $\leng \sigma$ the length of
 the sequence $\sigma$.% and the empty sequence and by $\varepsilon$ (so, for example, $\leng \varepsilon = 0$).
 Moreover if $\seq N$ is a sequence of terms, then $\leng \seq N$ indicates the lenght of $\seq N$.

\begin{definition}\label{def:10.1.13}
Given a term $M$ we define a partial map $M(\cdot): \Seq \rightharpoonup \KTer{t}$ as follows:
\[
\begin{array}{l}
M(\gs) \ass M \qquad \text{ if $\sigma$ is the empty sequence} \\
M(\gt \cdot (j,j')) \ass 
\begin{cases}
N_{j'}           & \text{if $M(\gt)$ is defined, $\hnf{M(\gt)} = \bd\seq\ga.\cadr{\gb}{n}\ap \pi_1\aps\pi_m$ and} \\
                 & \text{$j \leq m$ and $\pi_j$ has canonical form $\seq N\at\cddr{\gc}{k}$ or $\seq N\at\cddr{\nil}{k}$, with $j' \leq \leng \seq N$} \\
                 &  \\
\text{undefined} & \text{otherwise}
\end{cases}
\end{array}
\]
\end{definition}

The map $M(\cdot) : \Seq \rightharpoonup \KTer{t}$ for us \emph{is} the B\"{o}hm tree of $M$. We let $\dom{M} = \{\gs \in \Seq \sut M(\gs) \text{ is defined } \}$.

In the present section we prove a theorem which is the analogue of B\"{o}hm's Theorem for $\lambda$-calculus. Such theorem
 is supported fundamentally by the forthcoming Lemma \ref{lem:10.3.7bis} (analogous to what is called the B\"{o}hm out Lemma for the $\lambda$-calculus \cite[\S~10]{Bare84})
 which shows how to extract substitution instances of nodes of the B\"{o}hm tree of a term, in such a way that some important properties are preserved.

For technical reasons we need to introduce the set $\bdom{M}{n} = \{\gs \in \dom{M} \sut \leng \gs \leq n \}$. The following definitions \ref{def:10.1.13b}, \ref{def:10.1.13c} and \ref{def:special-term-stack}
 are all auxiliary for the statement and proof of the subsequent Lemma \ref{lem:10.3.7}.

\begin{definition}\label{def:10.1.13b}
We define the \emph{breadth} $\brea{M}$ and the \emph{weight} $\wei{M}$ of $M \in \KTer{t}$ as
$$
\brea{M} =
\begin{cases}
0 & \text{ if $\hnf{M}$ is undefined} \\
0 & \text{ if $\hnf{M}$ is defined but improper} \\
m & \text{ if $\hnf{M} = \bd\seq\ga.\cadr{\gb}{n}\ap \pi_1\aps\pi_m$} \\
\end{cases}
\quad \ 
\wei{M} =
\begin{cases}
0 & \text{ if $\hnf{M}$ is undefined} \\
0 & \text{ if $\hnf{M}$ is defined but improper} \\
n & \text{ if } \hnf{M} = \bd\seq\ga.\cadr{\gb}{n}\ap \pi_1\aps\pi_m \\
\end{cases}
$$
\end{definition}

\begin{definition}\label{def:10.1.13c}
The $n$-\emph{bounded breadth} $\bbrea{M}{n}$ and the $n$-\emph{bounded weight} $\bwei{M}{n}$ of a term $M$ are defined as
 \mbox{$\bbrea{M}{n} = max \{ \brea{M(\gs)} \sut \gs \in \bdom{M}{n} \}$} and 
 \mbox{$\bwei{M}{n} = max \{ \wei{M(\gs)} \sut \gs \in \bdom{M}{n} \}$}, respectively.
\end{definition}

% \begin{definition}\label{def:10.1.13e}
% The  of a term $M$ is defined as\\
% 
% \end{definition}

\begin{definition}\label{def:special-term-stack}
Let $q,p$ be natural numbers. We define the expressions
$$
 \spt{q} \ass \bd\epsilon_1\ldots\epsilon_q\gd.\cadr{\gd}{0}\ap\epsilon_1\aps\epsilon_q
 \qquad\qquad 
 \sps{\epsilon}{q}{p} \ass \underbrace{\spt{q}\at\ldots\at\spt{q}}_{p \text{ times}} \at\epsilon
$$
\end{definition}

Clearly by the common conventions about bound variables, in the above definition $\epsilon_1,\ldots,\epsilon_q,\gd$ are all distinct from each other
 and from $\epsilon$.

% \begin{definition}\label{def:special-substitution}
% Let $\gs = (j_1,j_1')\cdots (j_l,j_l')\in \bdom{M}{n}$. We define the term
% $$M\{\gs,p,q\} = M_\gs\sub{\sps{\epsilon_1}{q}{p}}{\gb_1}\cdots\sub{\sps{\epsilon_{l-1}}{q}{p}}{\gb_{l-1}}$$
% where
% \begin{itemize}
%  \item each $\gb_i$ is the head variable of
%  \mbox{$M_{(j_1,j_1')\cdots (j_{i-1},j_{i-1}')}\sub{\sps{\epsilon_1}{q}{p}}{\gb_1}\cdots\sub{\sps{\epsilon_{i-1}}{q}{p}}{\gb_{i-1}}$}
%  \item $\seq \epsilon = \epsilon_1,\ldots,\epsilon_l$ is a sequence of pairwise distinct variables disjoint from $\seq \gb$ and not
%  occurring in $M$.
% \end{itemize}
% \end{definition}

The next lemma is the combinatorial core of the forthcoming Lemma \ref{lem:10.3.7bis}, and is the analogue for the extended stack
 calculus of what is called the \emph{B\"{o}hm-out technique} for the $\lambda$-calculus (see \cite[\S~10]{Bare84}).
 For a sequence $\gs = (j_1,j_1')\cdots (j_l,j_l')$ and a natural number $i \leq l$ we indicate with $\trun{\gs}{i}$ the sequence
 $(j_1,j_1')\cdots (j_i,j_i')$ (so for example $\trun{\gs}{0}$ is the empty sequence).

\begin{lemma}[B\"{o}hm out]\label{lem:10.3.7}
Let $M$ be a term, let $n$ a natural number and let $q \geq \bbrea{M}{n}$, $p\geq \bwei{M}{n}$.
 Then for every sequence $\gs\in \bdom{M}{n}$ there exists a context $C[\cdot]$ such that \\
 $C[M] \mslabelto{\s} M(\gs)\sub{\sps{\epsilon_1}{q}{p}}{\gb_1}\cdots\sub{\sps{\epsilon_{l}}{q}{p}}{\gb_{l}}$ where
\begin{enumerate}[(i)]
\item $l = \leng \gs$ and for each $i=1,\ldots,l$, $\gb_i$ is the head variable of\\
      $M({\trun{\gs}{i-1}})\sub{\sps{\epsilon_1}{q}{p}}{\gb_1}\cdots\sub{\sps{\epsilon_{i-1}}{q}{p}}{\gb_{i-1}}$
\item $\epsilon_1,\ldots,\epsilon_{l}$ is a sequence of pairwise distinct variables disjoint from $\gb_1,\ldots,\gb_{l}$ and not occurring in $M$
\item $C[M]$ has a proper hnf iff $M(\gs)$ has a proper hnf.
\end{enumerate}
\end{lemma}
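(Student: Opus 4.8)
The plan is to induct on the length $l = \leng\gs$ of the sequence, building $C[\cdot]$ as the composition of a context obtained from the induction hypothesis with one further \emph{selector context} that peels off exactly one more level of the B\"ohm tree. The base case $l = 0$ is immediate: $\gs$ is empty, $M(\gs) = M$, and $C[\cdot]\ass[\cdot]$ works, conditions (i),(ii) being vacuous and (iii) trivial.

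For the inductive step write $\gs = \gt\cdot(j,j')$. Since $\leng\gt = l-1 < l \leq n$ we have $\gt\in\bdom M n$, so by the induction hypothesis there is a context $C_0[\cdot]$ with $C_0[M]\mslabelto\s M' \ass M(\gt)\sub{\sps{\epsilon_1}{q}{p}}{\gb_1}\cdots\sub{\sps{\epsilon_{l-1}}{q}{p}}{\gb_{l-1}}$ enjoying (i)--(iii) relative to $\gt$. Because $\gs\in\dom M$, Definition~\ref{def:10.1.13} forces $\hnf{M(\gt)}$ to be defined and proper, and together with clause (iii) for $\gt$ this gives that $M'$ has a proper hnf, say $\hnf{M'} = \bd\ga_1\ldots\ga_k.\cadr{\gb_l}{n_0}\ap\pi_1\aps\pi_m$; this determines $\gb_l$ as the head variable of $M'$, as (i) requires, and tracing the definitions one identifies the $j'$-th entry of the canonical form of $\pi_j$ with $M(\gs)\sub{\sps{\epsilon_1}{q}{p}}{\gb_1}\cdots\sub{\sps{\epsilon_{l-1}}{q}{p}}{\gb_{l-1}}$.

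Next I would choose a fresh $\epsilon_l$ and design a selector context $C_1[\cdot]$ that, applied to $M'$: (a) consumes the leading abstractions $\bd\ga_1\ldots\ga_k$ by application, supplying the stack $\sps{\epsilon_l}{q}{p}$ where $\gb_l$ is instantiated (directly as one of the $\ga_i$, or after first binding $\gb_l$) and, for the other leading abstractions, stacks chosen --- via the usual convention keeping bound variables fresh --- so that the targeted subterm is not altered; (b) triggers the contraction of the head $\cadr{\gb_l}{n_0}$, because $\sps{\epsilon_l}{q}{p} = {\spt q}^{p}\at\epsilon_l$ carries enough copies of $\spt q$ (this is the role of $p\geq\bwei M n$) for $\cadr{\sps{\epsilon_l}{q}{p}}{n_0}$ to reduce to $\spt q$, leaving $\spt q$ applied to $\pi_1\aps\pi_m$ followed by further stacks from $C_1[\cdot]$; (c) since $q\geq\bbrea M n\geq m$, a handful of additional selector stacks make $\spt q$ contract and single out $\pi_j$, after which one last selector term extracts $\cadr{\pi_j}{j'-1} =_\s N_{j'}$. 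With $C[\cdot]\ass C_1[C_0[\cdot]]$ we then get $C[M]\mslabelto\s N_{j'}\sub{\sps{\epsilon_l}{q}{p}}{\gb_l} = M(\gs)\sub{\sps{\epsilon_1}{q}{p}}{\gb_1}\cdots\sub{\sps{\epsilon_l}{q}{p}}{\gb_l}$, while clause (ii) holds by always picking $\epsilon_l$ distinct from the earlier $\epsilon_i$'s and $\gb_i$'s and not occurring in $M$.

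Finally, for clause (iii): by confluence (Theorem~\ref{thm:CR-bd}) and completeness of head reduction for finding head normal forms, $C[M]$ has a proper hnf iff the reduct $M(\gs)\sub{\sps{\epsilon_1}{q}{p}}{\gb_1}\cdots\sub{\sps{\epsilon_l}{q}{p}}{\gb_l}$ does, so it is enough to show that substituting the stacks $\sps{\epsilon_i}{q}{p}$ for the head variables $\gb_i$ preserves both the presence and the absence of a proper hnf; this is exactly what the shape of $\spt q$ and $\sps\epsilon q p$ is designed for --- $\spt q$ reduces to a proper hnf whatever stacks it meets, and every $\car/\cdr$ probe of $\sps\epsilon q p$ relevant to $\bdom M n$ returns $\spt q$ rather than the fresh tail $\epsilon$, so no proper hnf below $M(\gs)$ is destroyed and none is created. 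I expect the main obstacle to be the bookkeeping for reductions under substitution: plugging $\sps{\epsilon_i}{q}{p}$ in for $\gb_i$ fires new redexes, so $\hnf{M'}$ need not be a substitution instance of $\hnf{M(\gt)}$, and one must still verify that its breadth and weight remain $\leq q$ and $\leq p$ --- which is precisely why $q,p$ are taken as bounds over all of $\bdom M n$ and why $\spt q$ is given arity $q$ (a fired copy of $\spt q$ produces a head of weight $0$ applied to at most $q$ stacks). Carrying this through, i.e. checking that the dynamically determined $\gb_1,\ldots,\gb_l$ and the whole reduction sequence behave as described, is the technical core of the argument.
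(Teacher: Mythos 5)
Your plan is essentially the paper's own proof: induction on $\leng\gs$, composing the context from the induction hypothesis with a one-level selector that instantiates the head variable by $\sps{\epsilon_l}{q}{p}$ (so that the head $\cadr{\gb_l}{n_0}$ resolves to $\spt{q}$, which re-exposes the argument stacks padded up to breadth $q$) and then extracts the $(j,j')$-component, with clause (iii) following because $\spt{q}$ always produces a proper hnf. You correctly identify the same technical points the paper addresses (the case split on whether the head variable is bound, free, or previously substituted, and the fact that $\hnf{(M(\gt))^*}$ is a further reduct rather than a literal substitution instance of $\hnf{M(\gt)}$), so apart from not writing the selector contexts explicitly, this is the same argument.
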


\begin{proof}
We proceed at the same time to define the context $C[\cdot]$ and prove its properties by induction on the length of $\gs$.

\noindent If $\gs$ is the empty sequence, then $C[\cdot] \ass [\cdot]$ and the statement is trivially satisfied. 

\noindent Let $\gs = (j_1,j_1')\cdots (j_{l},j_{l}') \in \bdom{M}{n}$ and $\gt = \trun{\gs}{l-1}$.
 Now $\gt \in \bdom{M}{n}$, since $\gt < \gs$, so by induction hypothesis applied to $\gt$ we have a context $C[\cdot]$ such that\\
 $C[M] \mslabelto{\s} M(\gt)\sub{\sps{\epsilon_1}{q}{p}}{\gb_1}\cdots\sub{\sps{\epsilon_{l-1}}{q}{p}}{\gb_{l-1}}$ where
 \begin{enumerate}[(i)]
 \item for each $i=1,\ldots,l-1$, $\gb_i$ is the head variable of
 \mbox{$M({\trun{\gt}{i-1}})\sub{\sps{\epsilon_1}{q}{p}}{\gb_1}\cdots\sub{\sps{\epsilon_{i-1}}{q}{p}}{\gb_{i-1}}$}
 \item $\seq \epsilon = \epsilon_1,\ldots,\epsilon_{l-1}$ is a sequence of pairwise distinct variables disjoint from
       $\seq \gb = \gb_1,\ldots,\gb_{l-1}$ and not occurring in $M$.
 \end{enumerate}
For simplicity, we write $E^*$ for $E\sub{\sps{\epsilon_1}{q}{p}}{\gb_1}\cdots\sub{\sps{\epsilon_{l-1}}{q}{p}}{\gb_{l-1}}$, so that
 $C[M] \mslabelto{\s} (M(\gt))^*$. The definition of the new context for the longer sequence $\gs$ is based on the shape of the hnf of $C[M]$
 (and therefore on its existence). Since $\gt \in \bdom{M}{n}$ and it is not maximal, we have that $M(\gt)$ does have a proper hnf, say,
 $\hnf{M(\gt)} \equiv \bd\seq\ga.\cadr{\gb}{h}\ap \pi_1\at\ldots\at \pi_m$.
 Then we have
\[
\begin{array}{lcl}
\hnf{M(\gt)}^* 
& \equiv &
\begin{cases}
\bd\seq\ga.\spt{q} \ap \pi_1^*\aps \pi_m^*       & \text{if } \gb \in \seq\gb \\
\bd\seq\ga.\cadr{\gb}{h} \ap \pi_1^*\aps \pi_m^* & \text{if } \gb \not\in \seq\gb
\end{cases}
\\ 
& \mslabelto{\s} &
\begin{cases}
\bd\seq\ga\varepsilon_{m+1}\ldots\varepsilon_{q}\gd.\cadr{\gd}{0}\ap\pi_1^*\aps \pi_m^*\ap \varepsilon_{m+1}\aps\varepsilon_{q} 
    & \text{if } \gb \in \seq\gb \\
\bd\seq\ga.\cadr{\gb}{h} \ap \pi_1^*\aps \pi_m^* 
    & \text{if } \gb \not\in \seq\gb
\end{cases}
\\
& \equiv & \hnf{(M(\gt))^*}
\end{array}
\]
because by hypothesis $p \geq h$ and $q \geq m$. The computations above give, according to the different cases, the head variable of $C[M]$,
 since $C[M] \mslabelto{\s} (M(\gt))^*$. Now let $\epsilon \not\in \FV(\hnf{(M(\gt))^*})$ and set
\begin{itemize}
\item
$C'[\cdot] \ass
\begin{cases}
[\cdot] \ap \seq \ga & \text{if $\gb \in \seq\gb$} \\ %$\gb\not\in \seq\ga$ and 
(\bd\gb.[\cdot] \ap \seq \ga) \ap \sps{\epsilon}{q}{p} & \text{if $\gb \not\in \seq\gb$ and $\gb\not\in \seq\ga$} \\
[\cdot] \ap \ga_1 \aps \ga_{r-1}\ap \sps{\epsilon}{q}{p} \ap \ga_{r+1} \aps \ga_t & \text{if $\gb \not\in \seq\gb$ and $\gb \equiv \ga_r \in \seq\ga = \ga_1,\ldots,\ga_t$}
\end{cases}$
\item $C''[\cdot] \ass [\cdot]\ap \varepsilon_{m+1}\aps\varepsilon_{q} \ap ((\bd\ga_1\ldots\ga_q.\cadr{\ga_{j_l}}{j_l'-1})\at\epsilon)$
\end{itemize}
%Note that we do not care whether the variable $\gb$ is bound on not in $\hnf{M(\gt)}$, because our construction works in either case.
We claim that the context $D[\cdot] \ass C''[C'[C[\cdot]]]$ satisfies the statement of the lemma. By all the definitions and results above we have that
$$
C'[\hnf{(M(\gt))^*}] \mslabelto{\s} \bd\varepsilon_{m+1}\ldots\varepsilon_{q}\gd.\cadr{\gd}{0}\ap\pi_1^*\sub{\sps{\epsilon}{q}{p}}{\gb}\aps\pi_m^*\sub{\sps{\epsilon}{q}{p}}{\gb}\ap\varepsilon_{m+1}\aps\varepsilon_{q}
$$
because if $\gb \in \seq\gb$, then the sequences of stacks $\pi_1^*,\ldots,\pi_m^*$ and 
 $\pi_1^*\sub{\sps{\epsilon}{q}{p}}{\gb},\ldots,\pi_m^*\sub{\sps{\epsilon}{q}{p}}{\gb}$ coincide since in that case the variable
 $\gb$ does not occur free in $\pi_1^*,\ldots,\pi_m^*$. Therefore 
$$
D[M] \mslabelto{\s} C''[C'[(M(\gt))^*]] \mslabelto{\s} C''[C'[\hnf{(M(\gt))^*}]] \mslabelto{\s} \cadr{(\pi_{j_l}^*\sub{\sps{\epsilon}{q}{p}}{\gb})}{j_l'-1} \mslabelto{\s} (M(\gs))^*\sub{\sps{\epsilon}{q}{p}}{\gb}
$$
Finally we remark that $M(\gs)$ has a proper hnf iff $(M(\gs))^*\sub{\sps{\epsilon}{q}{p}}{\gb}$ has a proper hnf. This concludes the proof.
\end{proof}

Note that if $\sigma \in \dom{M}$ is non-empty, then for every proper prefix $\gt<\gs$, the term $M(\gt)$ must have a proper hnf. This fact 
 allows the ``navigation'' of the B\"{o}hm tree of $M$ implemented in Lemma \ref{lem:10.3.7}. The improper hnf's do not play the same role that
 head normal forms have in the $\lambda$-calculus.

% \begin{remark}\label{rem:10.3.7}
% The statement of Lemma \ref{lem:10.3.7} remains true if we replace $q = \bbrea{M}{n}$ by another natural number
%  $q' \geq \bbrea{M}{n}$ and $p=\bwei{M}{n}$ by another natural number $p'\geq \bwei{M}{n}$.
% \end{remark}
 
Now we want to look at terms as maps which are defined also at nodes reachable by the suitable amount of $\eta$-expansions. To this end the
 following concept of \emph{path expansion} will be used to define these maps.

\begin{definition}[Path expansion]\label{def:path-exp}
Let $\gs$ be a sequence and let $M$ be a term. We define the \emph{path expansion $\pexp{M}{\gs}$ of $M$ by $\gs$} by induction
 on the length of $\gs$ as follows:
\[
\begin{array}{l}
\pexp{M}{\sigma} \ass M \qquad \text{ if $\sigma$ is the empty sequence} \\
\pexp{M}{(j,j')\cdot\gt} \ass 
\begin{cases}
\bd\seq\ga.\cadr{\gb}{n}\ap\pi_1\aps (\seq N\at\cadr{\gc}{k}\ats\pexp{\cadr{\gc}{k+j'-1}}{\gt}\at\cddr{\gc}{k+j'}) \aps \pi_m & \\
\quad\text{if $\hnf{M} = \bd\seq\ga.\cadr{\gb}{n}\ap \pi_1\aps \pi_m$, $j \leq m$ and} & \\
\quad\text{$\seq N\at\cddr{\gc}{k}$ is the canonical form of $\pi_j$ and $j' > \leng \seq N$} & \\
                         & \\
\bd\seq\ga\gc_1\ldots\gc_{j-m}.\cadr{\gb}{n}\ap\pi_1\aps\pi_m\ap\gc_1\aps(\cadr{\gc_{j-m}}{0}\ats\pexp{\cadr{\gc_{j-m}}{j'-1}}{\gt}\at\cddr{\gc_{j-m}}{j'}) & \\
\quad\text{if $\hnf{M} = \bd\seq\ga.\cadr{\gb}{n}\ap \pi_1\aps \pi_m$ and $j > m$ } & \\
                         & \\
\text{undefined} \quad \text{otherwise} & 
\end{cases}
\end{array}
\]
\end{definition}

\begin{definition}\label{def:10.1.13-virtual}
Given a term $M$ we define a partial map $M_{(\cdot)} : \Seq \rightharpoonup \KTer{t}$ (extending that of Definition \ref{def:10.1.13}) as follows:
$$
M_\gs \ass
\begin{cases}
\pexp{M(\gt)}{\gt'}(\gt') & \text{if $\gt$ is the longest prefix of $\gs$ such that $\gt \in \dom{M}$, $\gs = \gt \cdot \gt'$ and} \\
                          & \text{$\pexp{M(\gt)}{\gt'}$ is defined} \\
                          &  \\
\text{undefined}          & \text{otherwise} \\
\end{cases}
$$
\end{definition}

The map $M_{(\cdot)}: \Seq \rightharpoonup \KTer{t}$ contains information about all possible $\eta$-expansions of the B\"{o}hm tree of $M$.
 We let $\virt{M} = \{\gs \in \Seq \sut M_\gs \text{ is defined } \}$. Note that $\dom{M} \subseteq \virt{M}$ and for every $\gs \in \dom{M}$, the values $M_\gs$ and $M(\gs)$ coincide. The elements of $\virt{M}$ are the
\emph{virtual sequences} of $M$. The map $M_{(\cdot)}$ extends $M(\cdot)$ by giving also \emph{virtual nodes}, which are intuitively nodes of some
 $\eta$-expansion of the B\"{o}hm tree of $M$, but still $M_{(\cdot)}$ cannot return the unreachable nodes, that correspond to sequences in
 $\Seq - \virt{M}$, which do not belong to any of the $\eta$-expansions of the B\"{o}hm tree of $M$. Note that for the maximal sequences
 $\gs \in \dom{M}$, the term $M_\gs$ may have an improper hnf or not have an hnf at all, while for non-maximal $\gs \in \dom{M}$, the term
 $M_\gs$ must have a proper hnf.

%The following lemma extends Lemma \ref{lem:10.3.7} to virtual sequences.

%\newcounter{10.3.7bis}
%\setcounter{10.3.7bis}{\value{theorem}}
\begin{lemma}\label{lem:10.3.7bis}
Let $M,N$ be terms. If $\gs \in \virt{M}\cap\virt{N}$ is minimal such that $M_\gs \not\sim N_\gs$, then there exists a head context
 $C[\cdot]$ such that $C[M] \not\sim C[N]$. Moreover $C[M]$ has a proper hnf iff $M_\gs$ has a proper hnf and $C[N]$ has a proper hnf
 iff $N_\gs$ has a proper hnf.
\end{lemma}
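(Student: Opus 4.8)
The plan is to prove, by induction on $\len\gs$, the following strengthening: whenever $\gs\in\virt M\cap\virt N$ and $M_\gt\sim N_\gt$ for \emph{every} proper prefix $\gt<\gs$, there are natural numbers $q,p$, pairwise distinct variables $\gep_1,\ldots,\gep_{l}$ with $l=\len\gs$ that do not occur in $M$ or $N$, and a head context $C[\cdot]$ such that, writing $E^{*}$ for $E\sub{\sps{\gep_1}{q}{p}}{\gb_1}\cdots\sub{\sps{\gep_{l}}{q}{p}}{\gb_{l}}$,
\[
C[M]\mslabelto{\s} (M_\gs)^{*}
\qquad\text{and}\qquad
C[N]\mslabelto{\s} (N_\gs)^{*},
\]
and moreover $C[M]$ (resp.\ $C[N]$) has a proper hnf iff $M_\gs$ (resp.\ $N_\gs$) does. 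Here each $\gb_i$ denotes the head variable at depth $i-1$ of the path $\gs$, which is meaningful \emph{simultaneously} for $M$ and $N$: for $\gt<\gs$, inspection of Definitions~\ref{def:path-exp} and~\ref{def:10.1.13-virtual} shows that both $M_\gt$ and $N_\gt$ have a \emph{proper} hnf, and then $M_\gt\sim N_\gt$ together with Definition~\ref{def:similarity}(1) forces $\hnf{M_\gt}$ and $\hnf{N_\gt}$ to share the very same head $\cadr\gb{n}$. Granting this strengthening, the Lemma is its instance with $\gs$ the given minimal sequence: closure of $\not\sim$ under the substitutions above (a property already invoked in the proof of Theorem~\ref{thm:10.4.1}) turns $M_\gs\not\sim N_\gs$ into $(M_\gs)^{*}\not\sim(N_\gs)^{*}$, hence into $C[M]\not\sim C[N]$, and the hnf biconditionals are exactly those asserted.

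The base case $\len\gs=0$ is trivial with $C[\cdot]\ass[\cdot]$, since $M_\gs\equiv M$ and $N_\gs\equiv N$. For the inductive step I would write $\gs=\gt\cdot(j,j')$ with $\gt=\trun\gs{l-1}$. Every proper prefix of $\gt$ is a proper prefix of $\gs$, so the induction hypothesis applies to $\gt$ and yields a head context $C_0[\cdot]$ sending $M$ and $N$ to $(M_\gt)^{*}$ and $(N_\gt)^{*}$ respectively (with the hnf clauses for $\gt$). Since $\gt<\gs$, the term $M_\gt$ — hence, the substitution replacing only non-head stack variables and not destroying properness, also $(M_\gt)^{*}$ — has a proper hnf; likewise $(N_\gt)^{*}$. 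From $M_\gt\sim N_\gt$ and closure of $\sim$ under the substitution, $(M_\gt)^{*}\sim(N_\gt)^{*}$; so by Definition~\ref{def:similarity}(1) their head normal forms have the same head $\cadr\gb{n}$, and after a common amount of $\eta$-expansion their argument stacks become pairwise $\ssim$-related, the surplus arguments on the longer side being $\ssim$-related to the surplus bound variables — which is exactly what an $\eta$-expansion records.

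The heart of the step will be one further round of the B\"{o}hm-out construction of Lemma~\ref{lem:10.3.7}, generalised so as to realise the path-expansion clauses of Definition~\ref{def:path-exp}. I would build a head context $C_1[\cdot]$ which, applied to the head normal form at hand, supplies enough fresh variable-arguments and a selector $\spt{q}$ to reach argument position $j$ even when $j$ exceeds the current arity (this performs the $\eta$-expansion of the second clause of Definition~\ref{def:path-exp}), then projects onto the canonical form of that $j$-th argument, $\eta$-expands its tail and re-selects if $j'$ is beyond the length of that canonical form (the first clause of Definition~\ref{def:path-exp}), and finally returns its $j'$-th component — which, up to one more substitution $\sub{\sps{\gep_{l}}{q}{p}}{\gb_{l}}$ for the head variable $\gb_l$ of $(M_\gt)^{*}$, is precisely $M_\gs$; the numbers $q,p$ are taken with $q\geq\bbrea{M}{\len\gs},\bbrea{N}{\len\gs}$ and $p\geq\bwei{M}{\len\gs},\bwei{N}{\len\gs}$, exactly as in Lemma~\ref{lem:10.3.7}, so that the selectors always have enough room. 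The crucial observation is that the \emph{same} context $C_1[\cdot]$ drives $N$ to $N_\gs$ under the matching substitution: because $(M_\gt)^{*}\sim(N_\gt)^{*}$ the two head normal forms share their head, and their argument stacks are $\ssim$-matched modulo the $\eta$-expansions that $C_1[\cdot]$ itself re-performs, so feeding the identical arguments and selectors to $\hnf{(N_\gt)^{*}}$ lands on the $(j,j')$-node of the path-expanded B\"{o}hm tree of $N$, i.e.\ on $N_\gs$ under the same parallel substitution. Propagating the hnf biconditional through the finitely many $\mslabelto{\s}$-steps — none of which contracts a redex whose active part is an instance of the hole's occurrence, exactly as in Lemma~\ref{lem:10.3.7} — and setting $C[\cdot]\ass C_1[C_0[\cdot]]$ closes the induction.

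The step I expect to be the main obstacle is this last synchronisation. A head context is a fixed finite object, whereas along $\gs$ the B\"{o}hm trees of $M$ and $N$ may exhibit \emph{different} arities at corresponding nodes — similarity only guarantees $k-m=k'-m'$, not $k=k'$ — and several nodes on the path may be merely \emph{virtual}, reached only after $\eta$-expansions whose length differs on the two sides. One must therefore phrase what $C_1[\cdot]$ does in an arity-uniform way — essentially ``apply the head to a prescribed list of arguments, then select'' — and verify, using the $\ssim$-matching of argument stacks together with the surplus-argument/surplus-binder clause of Definition~\ref{def:similarity} absorbing the $\eta$ discrepancy, that it extracts the ``same'' node on the two sides. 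Keeping an exact account of the accumulated substitutions, so that the displayed $\mslabelto{\s}$-equalities are literally instances of one common parallel substitution and dissimilarity transfers verbatim from $M_\gs,N_\gs$ to $C[M],C[N]$, is the other point needing care.
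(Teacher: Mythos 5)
Your strategy---iterated B\"{o}hm-out along $\gs$, with similarity at every proper prefix guaranteeing that a single context drives both $M$ and $N$ to matching substitution instances of $M_\gs$ and $N_\gs$---is the same underlying technique as the paper's, but the decomposition is genuinely different. The paper does not set up a new induction on $\leng\gs$: it writes $\gs=\gs'(j,j')$, lets $\rho$ be the longest prefix of $\gs'$ lying in $\dom{N}$ (assuming w.l.o.g.\ that $\dom{M}$ reaches at least as far along $\gs'$), and invokes the already-proved Lemma \ref{lem:10.3.7} three times as a black box: on $M$ with the sequence $\rho$; on the \emph{explicitly path-expanded term} $\pexp{(N(\rho))^*}{\rho'}$ with the remaining segment; and once more for the final step $(j,j')$. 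The second application is precisely the device your plan lacks: by first materializing the path expansion as a concrete term, the virtual nodes along $\rho'$ become genuine nodes of that term's B\"{o}hm tree, so the single-term B\"{o}hm-out lemma applies verbatim and no ``virtual-node-aware'' generalization of its inductive step ever has to be proved.

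This matters because the piece you yourself flag as the main obstacle---constructing $C_1[\cdot]$ so that it simultaneously realizes both clauses of Definition \ref{def:path-exp} and lands on corresponding nodes of $M$ and of $N$ despite differing arities and differing amounts of $\eta$-expansion on the two sides---is left entirely as a description, and it is exactly the content of the strengthened two-term, path-expansion-aware B\"{o}hm-out lemma that your induction needs; so as written the inductive step has a real gap. The gap looks fillable (padding to a common arity $q\geq\bbrea{M}{n},\bbrea{N}{n}$ together with clause (1) of Definition \ref{def:similarity}, which gives equal heads and $k-m=k'-m'$, does make the selector $(\bd\ga_1\ldots\ga_q.\cadr{\ga_j}{j'-1})$ arity-uniform), but you should either carry that verification out or adopt the paper's factorization and let Lemma \ref{lem:10.3.7}, applied to path-expanded terms, absorb it. One smaller inaccuracy: the substitutions $\sub{\sps{\gep_i}{q}{p}}{\gb_i}$ do hit head variables---that is the point of the B\"{o}hm-out---and properness of the hnf survives not because heads are untouched but because $\cadr{\sps{\gep}{q}{p}}{h}$ reduces to the abstraction $\spt{q}$ and thence to a proper hnf with a fresh head, as in the case analysis inside the proof of Lemma \ref{lem:10.3.7}.
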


\begin{proof}
If $\sigma$ is the empty sequence then the result is trivial. Now assume $\gs=\gs'(j,j')$, so that $M_{\gs'} \sim N_{\gs'}$
 and the similarity also holds for all prefixes of $\gs'$.
Let $\gt$ be the longest prefix of $\sigma'$ contained in $\dom{M}$ and let $\gt'$ be such that $\sigma'=\gt\gt'$.
 Let $\rho$ be the longest prefix of $\sigma'$ contained in $\dom{N}$ and let $\rho'$ be such that $\sigma'=\rho\rho'$.
 We assume w.l.o.g. that $\gt \geq \rho$. Let $n=\leng\sigma$, let $p$ be greater than $\bwei{M}{n}$, $\bwei{N}{n}$ and of all the
 second components of the pairs occurring in $\gs$. Let $q$ be greater than $\bbrea{M}{n}$, $\bbrea{N}{n}$ and of all the first component
 of the pairs occurring in $\gs$. Let $C[\cdot]$ be the context produced by Lemma \ref{lem:10.3.7} applied to $M$, $n$, $\rho$, $q$ and $p$.
 Then $C[M]$ and $C[N]$ reduce, respectively, to substitution instances $(M(\rho))^*$ and $(N(\rho))^*$ where the
 same substitutions have been applied. Therefore $(M(\rho))^*\sim (N(\rho))^*$ and $C[M]\sim C[N]$. Now let 
 $C'[\cdot]$ be the context produced by Lemma \ref{lem:10.3.7} applied to $\pexp{(N(\rho))^*}{\rho'}$, $n'=\leng\rho'$, $\gt'$, $q$ and $p$.
 Once again $C'[C[N]]$ reduces to a substitution instance $(\pexp{N(\rho)^*}{\rho'}(\rho'))^o$ and, because $M_{\gs'} \sim N_{\gs'}$,
 we have $C'[C[M]] \sim C'[C[N]]$. Finally let $C''[\cdot]$ be the context produced by Lemma \ref{lem:10.3.7} applied to
 $(\pexp{N(\rho)^*}{\rho'}(\rho'))^o$, $n''=1$, $\gt''=(j,j')$, $q$ and $p$. Then the context $C''[C'[C[\cdot]]]$ has the properties
 required in the statement, since $C''[C'[C[M]]] \not\sim C''[C'[C[N]]]$, $C''[C'[C[M]]]$ has a proper hnf iff $M_\gs$ has a proper hnf
 and $C''[C'[C[N]]]$ has a proper hnf iff $N_\gs$ has a proper hnf.
\end{proof}
%The proof is reported in Appendix C.

\begin{theorem}[B\"{o}hm's theorem for the extended stack calculus]\label{cor:10.4.2}
Let $M,N$ be two distinct $\labelto{\s\eta}$-normal forms without subterms which are improper hnf's. Then $M$ and $N$ are separable.
\end{theorem}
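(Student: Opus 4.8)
The plan is to prove the contrapositive: assuming that $M$ and $N$ are \emph{not} separable, we derive $M \equiv N$, which contradicts the hypothesis that they are distinct.

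I would first record a structural observation. Since $M$ and $N$ are $\labelto{\s\eta}$-normal forms and none of their subterms (in particular neither $M$ nor $N$ itself) is an improper hnf, each of $M,N$ is a proper hnf, and so is every value $M_\gs$ (resp.\ $N_\gs$) for $\gs\in\virt{M}$ (resp.\ $\gs\in\virt{N}$): a glance at Definitions \ref{def:10.1.13} and \ref{def:path-exp} shows that the head of a real or virtual node is always a variable, so that node has a proper hnf. Moreover both B\"ohm trees are finite because $M$ and $N$ are normal.

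The core step is the claim that, \emph{if $M$ and $N$ are not separable, then $M_\gs \sim N_\gs$ for every $\gs\in\virt{M}\cap\virt{N}$}. Suppose not and choose $\gs\in\virt{M}\cap\virt{N}$ minimal (for the prefix order) with $M_\gs\not\sim N_\gs$. By Lemma \ref{lem:10.3.7bis} there is a head context $C[\cdot]$ with $C[M]\not\sim C[N]$ such that $C[M]$ (resp.\ $C[N]$) has a proper hnf iff $M_\gs$ (resp.\ $N_\gs$) does; by the observation above $M_\gs,N_\gs$ are proper hnf's, so $C[M]$ and $C[N]$ have proper hnf's. Theorem \ref{thm:10.4.1} then yields a head context $D[\cdot]$ with $D[C[M]]=_\s\bT$ and $D[C[N]]=_\s\bF$, and since a composition of head contexts is again a head context, $D[C[\cdot]]$ separates $M$ and $N$ — a contradiction. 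This proves the claim.

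It remains to show that two $\labelto{\s\eta}$-normal forms $M,N$ whose B\"ohm trees have only proper nodes and which satisfy $M_\gs\sim N_\gs$ for all $\gs\in\virt{M}\cap\virt{N}$ are syntactically equal. I would check first that the hypothesis forces $\virt{M}=\virt{N}$, and then argue by induction on the (finite) height of the B\"ohm trees: at the empty sequence, $M\sim N$ together with the fact that $M$ and $N$ are $\eta_1$- and $\eta_2$-normal pins down the head variable, the exponent $n$, the number of leading $\bd$-abstractions, the number of applied stacks, and the canonical shape of each applied stack (any residual mismatch permitted by the clauses $k-m=k'-m'$ of $\ssim$ and $\sim$ would be an $\eta$-redex, contradicting normality); the subterms occurring in those stacks are exactly the values of $M$ and $N$ at the children of the empty sequence, and the induction hypothesis applies to them. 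Hence $M\equiv N$.

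I expect the last paragraph to be the main obstacle: it is a somewhat delicate bookkeeping argument establishing that the $\eta$-slack built both into the similarity relations $\ssim,\sim$ and into the passage from $\dom{\cdot}$ to $\virt{\cdot}$ is precisely annihilated by the assumption that $M$ and $N$ are $\eta$-normal. Once this rigidity is in place, the reduction through Lemma \ref{lem:10.3.7bis} and Theorem \ref{thm:10.4.1} carried out in the core step is routine, and the theorem follows.
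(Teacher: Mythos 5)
Your proof is correct and is essentially the paper's argument read contrapositively: both reduce the theorem to the existence of a minimal $\gs\in\virt{M}\cap\virt{N}$ with $M_\gs\not\sim N_\gs$ and then apply Lemma \ref{lem:10.3.7bis} followed by Theorem \ref{thm:10.4.1} to that node. The only difference is that you explicitly isolate and sketch the ``rigidity'' step --- that two distinct $\labelto{\s\eta}$-normal forms whose nodes are all proper must be dissimilar at some virtual node --- whereas the paper asserts this in a single clause (``there has to be a minimal sequence\ldots'') without argument.
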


\begin{proof}
Under the hypotheses of the statement about $M$ and $N$, there has to be a minimal sequence $\gs\in\virt{M}\cap\virt{N}$ such that $M_\gs \not\sim N_\gs$
 so that by Lemma \ref{lem:10.3.7bis} there exists a head context $C[\cdot]$ such that $C[M] \not\sim C[N]$.
 Now by hypothesis $M_\gs$ and $N_\gs$ are proper hnf's and hence $C[M]$ and $C[N]$ have proper hnf's. Therefore, applying
 Theorem \ref{thm:10.4.1} we have that $C[M]$ and $C[N]$ are separable, which trivially implies that $M$ and $N$ are separable.
\end{proof}

Supppose that there is a $\sigma \in \dom{M} \cap \dom{N}$ such that $M(\sigma) = \bd\seq\ga.\cadr{\nil}{h}\ap \seq\pi$ and
 $N(\sigma) = \bd\seq\ga'.\cadr{\nil}{h'}\ap \seq\varpi$. Assume $P$ is the $j'$-th term of the $j$-th stack of $\seq\pi$ and $Q$ is the $j'$-th term of the $j$-th stack of $\seq\varpi$ and that 
 $P\not\sim Q$. If $P$ and $Q$ are the only dissimilar subterms then no separating context can be built with the technique
 described in Lemma \ref{lem:10.3.7}. As a matter of fact the requirement, appearing in the statement of Theorem \ref{cor:10.4.2}, that $M$ and
 $N$ do not have subterms which are improper hnf's is more strict then necessary. In fact we only need that, among the dissimilar subterms, there
 is a $\gs\in\virt{M}\cap\virt{N}$ such that $M_\gs \not\sim N_\gs$ and $M_\gs,N_\gs$ are proper hnf's.\\
 We decided to study the extension of the stack calculus as defined in \cite{CaEhSa12}, thus including $\nil$ in the language.
 However $\nil$ received a special treatment throughout this paper, in the sense that the improper hnf's are kept do not play a role
 similar to the proper hnf's (see for example Definition \ref{def:op-equiv-extended}, Definition \ref{def:similarity} and Definition \ref{def:10.1.13}).
 The reason is that improper hnf's and terms without hnf are, in the extended stack calculus, in some sense comparable to what unsolvable terms are in the
 $\lambda$-calculus. In any case, we do not treat a notion of solvability for the extended stack calculus.

%%%%%%%%%%%%%%%%%%%%%%%%%%%%%%%%%%%%%%%%%%%%%%%%%%%%%%%%%%%%%%%%%%%%%%%%%%%%%%%%%%%%%%%%%%%%%%%%%%%%%%%%%%%%%%%%%%%%%%%%%%%%%%%%%%%%%
\subsection{Characterization of operational equivalence}
%%%%%%%%%%%%%%%%%%%%%%%%%%%%%%%%%%%%%%%%%%%%%%%%%%%%%%%%%%%%%%%%%%%%%%%%%%%%%%%%%%%%%%%%%%%%%%%%%%%%%%%%%%%%%%%%%%%%%%%%%%%%%%%%%%%%%

In this section we give a concrete characterization of operational equivalence that does not involve any universal quantification over head contexts
 (see Definition \ref{def:op-equiv}). Rather, this characterization is based on B\"{o}hm trees.

% \begin{definition}[$\gs$-similarity]\label{def:develop-similarity}
% We define a binary relation $\sim_\gs$ on $\KTer{t}$ as follows: $M \sim_\gs N$ iff
% \begin{itemize}
% \item either $\gs \in \virt{M} \cap \virt{N}$ and $M_\gs \sim N_\gs$ \\
%  (up to rename of variables which are bound in some $M_\gt,N_\gt$, where $\gt < \gs$, but free in $M_\gs,N_\gs$),
% \item or $\gs \in \Seq - (\virt{M} \cup \unr{N})$.
% \end{itemize}
% \end{definition}

\begin{definition}\label{def:infinite-similarity}
We define a binary relation $\simty$ on $\KTer{t}$ as follows: $M \simty N$ iff $\virt{M}=\virt{N}$ and for all $\gs \in \virt{M}$
 we have that $M_\gs \sim N_\gs$ (up to rename of variables which are bound in some $M_\gt,N_\gt$, where $\gt < \gs$, but free in
 $M_\gs,N_\gs$).
\end{definition}

% \begin{definition}\label{def:infinite-similarity}
% We define a binary relation $\simty$ on $\KTer{t}$ as follows: $M \simty N$ \ iff \ $\forall \gs \in \Seq.\ M \sim_\gs N$.
% \end{definition}

\begin{theorem}\label{thm:10.4.2}
$M \opeq N$ iff $M \simty N$.
\end{theorem}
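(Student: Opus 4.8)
The plan is to prove the two implications of $M \opeq N \iff M \simty N$ separately, using the machinery built up in the previous sections.

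\medskip

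\noindent\textbf{($\Leftarrow$)} For the direction $M \simty N \implies M \opeq N$, I would argue contrapositively: suppose $M \not\opeq N$, so that there is a head context $C[\cdot]$ with, say, $C[M]$ having a proper hnf and $C[N]$ not. The first step is to reduce a general head context to the analysis of B\"{o}hm trees. The key observation is that head reduction of $C[M]$ can only ``consume'' finitely much of the B\"{o}hm tree of $M$; more precisely, there is a natural number $n$ and a sequence $\gs \in \virt{M} \cap \virt{N}$ (of length at most $n$) such that the presence or absence of a proper hnf of $C[M]$ (resp. $C[N]$) is controlled by $M_\gs$ (resp. $N_\gs$). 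Concretely, one shows that if $M_\gs \sim N_\gs$ for all $\gs \in \virt{M} = \virt{N}$, then the head reduction of $C[M]$ and $C[N]$ proceed in lockstep — at each step the two reducts have the same ``shape'' in the sense of similarity, and similarity of terms with a proper hnf is preserved by application to a common stack and by $\bd$-abstraction — so $C[M]$ has a proper hnf iff $C[N]$ does. This is essentially a simulation argument; one would prove by induction on the number of head-reduction steps that $C[M]$ and $C[N]$ stay $\sim$-related (after accounting for the $\eta$-expansions recorded by $\virt{\cdot}$), invoking the definition of $\simty$ to keep feeding in the hypothesis as the navigation of the B\"{o}hm tree goes deeper.

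\medskip

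\noindent\textbf{($\Rightarrow$)} For $M \opeq N \implies M \simty N$, again argue contrapositively: assume $M \not\simty N$. Then either $\virt{M} \neq \virt{N}$, or there is a $\gs \in \virt{M} \cap \virt{N}$ with $M_\gs \not\sim N_\gs$; in the first case one also produces, by a short analysis of where the domains first diverge, a sequence where one side has a proper hnf and the other does not, which again yields $M_\gs \not\sim N_\gs$ for a suitable minimal $\gs$ (treating the improper-hnf / no-hnf cases via clauses (2),(3) of Definition~\ref{def:similarity}). Pick such a $\gs$ minimal. Apply Lemma~\ref{lem:10.3.7bis} to get a head context $C[\cdot]$ with $C[M] \not\sim C[N]$, and such that $C[M]$ has a proper hnf iff $M_\gs$ does, similarly for $N$. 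Now split into cases. If both $M_\gs$ and $N_\gs$ are proper hnf's, then $C[M],C[N]$ have proper hnf's, they are dissimilar, so by Theorem~\ref{thm:10.4.1} they are separable, and then Theorem~\ref{thm:sepa-nopeq} gives $C[M] \not\opeq C[N]$, whence $M \not\opeq N$ (since $\opeq$ is closed under head contexts, a witnessing head context for $C[M] \not\opeq C[N]$ composed with $C[\cdot]$ witnesses $M \not\opeq N$). If one of $M_\gs, N_\gs$ is improper or has no hnf while the other is a proper hnf, then by the ``Moreover'' part of Lemma~\ref{lem:10.3.7bis} exactly one of $C[M], C[N]$ has a proper hnf, so directly $C[M] \not\opeq C[N]$ via the empty head context, again giving $M \not\opeq N$. (If both $M_\gs,N_\gs$ were improper/without hnf they would be $\sim$ by clauses (2),(3), contradicting the choice of $\gs$.)

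\medskip

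\noindent\textbf{Main obstacle.} The delicate part is the ($\Leftarrow$) direction: making precise the claim that a general head context cannot ``see'' more of $M$ than its B\"{o}hm tree up to some finite depth, and that similarity of all virtual nodes forces head reductions to synchronize. One must carefully handle the $\eta$-expansion bookkeeping encoded in $\virt{\cdot}$ and $M_{(\cdot)}$ (Definitions~\ref{def:path-exp} and \ref{def:10.1.13-virtual}), since a head context may trigger $\eta$-expansions of subterms before they become active, which is exactly why the characterization uses $\simty$ (with virtual nodes) rather than bare equality of B\"{o}hm trees. The right formulation is an invariant maintained along head reduction stating that at every stage the two terms are related by $\sim$ together with matching virtual-sequence data; once that invariant is set up, preservation under the head-reduction rule and under the head-context constructors $[\cdot] \ap \pi$ and $\bd\ga.[\cdot]$ is routine, using that $\sim$ is closed under $=_\s$ and that clauses (1)--(3) of Definition~\ref{def:similarity} are stable under the relevant operations.
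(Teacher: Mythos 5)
Your ($\Rightarrow$) direction is essentially the paper's own argument: choose a minimal $\gs$ with $M_\gs \not\sim N_\gs$, apply Lemma~\ref{lem:10.3.7bis}, and then split according to whether both of $M_\gs,N_\gs$ have proper hnf's (Theorem~\ref{thm:10.4.1} plus Theorem~\ref{thm:sepa-nopeq}) or only one does (the ``moreover'' clause of Lemma~\ref{lem:10.3.7bis} directly). That half is correct and matches the paper.

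The ($\Leftarrow$) direction is where you diverge, and where your proposal has a genuine gap. The paper does not attempt a simulation argument at all: it observes that the ($\Rightarrow$) direction shows $\opeq$ is contained in $\simty$, and since $\opeq$ is HP-complete (Theorem~\ref{thm:HP-complete-extended}), i.e.\ maximally consistent, it cannot be properly contained in a consistent equational theory; hence $\simty$ and $\opeq$ must coincide. You instead propose to show that if $M \simty N$ then head reductions of $C[M]$ and $C[N]$ ``proceed in lockstep,'' and you assert that preservation of the invariant under head reduction and under the context constructors is routine. It is not. The relation $\sim$ is only a surface relation --- the paper itself points out that $\bd\gc.\cadr{\gc}{0}\ap \bT\at\gc \sim \bd\gc.\cadr{\gc}{0}\ap \bF\at\gc$ even though these terms are separable --- so any invariant phrased in terms of $\sim$ of the current reducts is destroyed as soon as the context substitutes a stack for the head variable and the computation descends into the arguments. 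What has to be maintained is the full relation $\simty$ between the reducts, together with the $\eta$-expansion bookkeeping of $\virt{\cdot}$, and proving that \emph{this} is stable under substitution of arbitrary stacks into corresponding (virtual) nodes is essentially a continuity/approximation theorem for B\"{o}hm trees --- the genuinely hard half of the classical $\lambda$-calculus characterization, which is precisely what the paper's appeal to HP-completeness is designed to avoid. So either carry out that substantial argument in full, or switch to the maximality argument; in the latter case what remains to be checked is that $\simty$ is a consistent equational theory in the paper's sense (in particular its closure under context formation), which is what licenses invoking maximality of $\opeq$.
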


\begin{proof}
\noindent ($\Rightarrow$) We prove the contrapositive. Suppose $M \not\simty N$. Then there exists a sequence $\gs$ such that $M \not\sim_\gs N$. Let 
 $\gs$ be minimal w.r.t. this property. Then $\gs \in \virt{M}\cap \virt{N}$, so that in fact $M_\gs \not\sim N_\gs$.

Since $M_\gs$ and $N_\gs$ are not similar, at least one among them must have a proper hnf: say it is $M_\gs$.
 Now suppose $N_\gs$ does not have a proper hnf. By Lemma \ref{lem:10.3.7bis} there exists
 a head context $C[\cdot]$ such that $C[M]$ has a proper hnf while $C[N]$ does not have a proper hnf. This proves that $M \not\opeq N$.\\
 If $N_\gs$ has a proper hnf, then by Lemma \ref{lem:10.3.7bis} there exists a head context $C'[\cdot]$ such that $C'[M] \not\sim C'[N]$ and both
 $C'[M]$, $C'[N]$ have a proper hnf; therefore applying Theorem \ref{thm:10.4.1} we obtain that $C'[M]$ and $C'[N]$ are separable and consequently
 by Theorem \ref{thm:sepa-nopeq}, we have $C'[M] \not\opeq C'[N]$. This trivially implies $M \not\opeq N$.\\

\noindent ($\Leftarrow$) Immediate, because by Theorem \ref{thm:HP-complete-extended}, the relation $\opeq$ is an HP-complete equational theory which, by 
 Theorem \ref{thm:10.4.2}, is contained in $\simty$.
 \end{proof}

% \begin{theorem}\label{thm:10.4.2b}
% If $M \simty N$, then $M \opeq N$.
% \end{theorem}
% 
% \begin{proof}
% 
% \end{proof}

%%%%%%%%%%%%%%%%%%%%%%%%%%%%%%%%%%%%%%%%%%%%%%%%%%%%%%%%%%%%%%%%%%%%%%%%%%%%%%%%%%%%%%%%%%%%%%%%%%%%%%%%%%%%%%%%%%%%%%%%%%%%%%%%%%%%%%%%%%%%%%%%%%%
\section{Conclusions}
%%%%%%%%%%%%%%%%%%%%%%%%%%%%%%%%%%%%%%%%%%%%%%%%%%%%%%%%%%%%%%%%%%%%%%%%%%%%%%%%%%%%%%%%%%%%%%%%%%%%%%%%%%%%%%%%%%%%%%%%%%%%%%%%%%%%%%%%%%%%%%%%%%

% The notion of \emph{operational equivalence} has been a subject of many research works in the literature. Essentially one considers as ``equivalent'' two $\lambda$-terms $M$ and $N$ when for every possible context $C[\cdot]$ the head
%  reduction process of $C[M]$ halts iff the head reduction process of $C[N]$ halts. It is common to think of two non-operationally equivalent terms
%  as programs that can be distinguished one another by an external oracle that make them interact with all possible environments, observing termination of head reduction. In particular one can choose a $P$ that admits a head normal form
%  and a $Q$ does not have a head normal form. In this sense $M$ and $N$ would be ``separated'' (or distinguished) by an oracle that observes
%  head-termination.

The stack calculus \cite{CaEhSa12} is a finitary functional language in which the $\lambda\mu$-calculus can be faithfully translated, in the sense
 that conversion (and typing, for the typed versions) is preserved by the translation.
 As it happens for the $\lambda\mu$-calculus, the stack calculus fails to have the separation property and in this paper we introduce the
 extended stack calculus which, as Saurin's $\Lambda\mu$-calculus \cite{Saurin05}, does have this property. The separation property proved
 in this paper for the extended stack calculus has consequences both on the semantical and on the syntactical side. For example it implies
 that $=_{\s\eta}$ is the maximal consistent congruence on normalizable terms extending $=_\s$, so that any model of the extended stack
 calculus cannot identify two different $\s\eta$-normal forms without being trivial.

Nonetheless the definition of operational equivalence involves a universal quantification over contexts but the problem
 of checking operational equivalence between $\nil$-free normalizable terms reduces to the problem of finding their $\s\eta$-normal forms (Theorem \ref{cor:10.4.2})
 (with a leftmost strategy, for example). The complete characterization of operational equivalence (also for non-normalizable terms)
 is achieved: two terms of the extended stack calculus are operationally equivalent iff they have the same
 B\"{o}hm tree, up to possibly infinite $\eta$-expansion (Theorem \ref{thm:10.4.2}), a condition that does not involve a quantification
 over all head contexts. We showed that operational equivalence is maximally consistent, i.e. it cannot be properly extended to another consistent equational theory
 both in the stack calculus and in the extended stack calculus. We work out the details of a B\"{o}hm-out technique for the extended stack calculus  
 (Lemma \ref{lem:10.3.7}). A nice feature of the extended stack calculus is that, having only one binder, it admits a simpler proof of B\"{o}hm's
 theorem, which is similar to the one for the $\lambda$-calculus. Besides the applications of B\"{o}hm's theorem, there has always been interest around the proof itself and the algorithmic content of the B\"{o}hm-out
 technique: from Huet's \cite{Huet93} interest in the implementation and mechanical formalization of B\"{o}hm's proof, Aehlig and Joachimski
 \cite{Aehlig02} alternative proof and to Dezani et al.'s account \cite{DGP08}. Saurin in \cite{Saurin10} establishes a
 standardization theorem for the $\Lambda\mu$-calculus, and studies B\"{o}hm-like trees for the $\Lambda\mu$-calculus,
 strengthening the separation results that he obtained in \cite{Saurin05}. In view of the interest in proofs of B\"{o}hm's
 theorem for various calculi, we believe useful to contribute in the present work with a direct proof of B\"{o}hm's
 theorem (i.e. with a B\"{o}hm-out technique) for the extended stack calculus, even if the mere separation
 result would follow by a suitable mutual translation with the $\Lambda\mu$-calculus.
 
It is out of the scope of this paper to analyze the typed extended stack calculus. In the case of Saurin's $\Lambda\mu$
 the Curry\textendash Howard isomorphism carries through via the straightforward extension of the type system.
 However Saurin \cite{Saurin10b} and Nakazawa and Katsumata \cite{Naka12} noted that with this approach many ``interesting'' $\Lambda\mu$-terms
 are not typeable (for example those used in the B\"{o}hm out technique). For this reason Saurin \cite{Saurin10b} studies an alternative type system for the $\Lambda\mu$-calculus.
 This latter approach could also be adapted to the extended stack calculus.

%%%%%%%%%%%%%%%%%%%%%%%%%%%%%%%%%%%%%%%%%%%%%%%%%%%%%%%%%%%%%%%%%%%%%%%%%%%%%%%%%%%%%%%%%%%%%%%%%%%%%%%%%%%%%%%%%%%%%%%%%%%%%%%%%%%%
\bibliographystyle{eptcs}
\bibliography{bibliography}
%%%%%%%%%%%%%%%%%%%%%%%%%%%%%%%%%%%%%%%%%%%%%%%%%%%%%%%%%%%%%%%%%%%%%%%%%%%%%%%%%%%%%%%%%%%%%%%%%%%%%%%%%%%%%%%%%%%%%%%%%%%%%%%%%%%%

% \appendix
% \input{Appendix-A}
% \input{Appendix-B}
% \input{Appendix-C}

\end{document}